\documentclass[journal,onecolumn]{IEEEtran}
\IEEEoverridecommandlockouts
\usepackage[T1]{fontenc} 
\usepackage{times}
\usepackage[cmex10]{amsmath}
\usepackage{amssymb}
\usepackage{amsthm}
\usepackage{color}
\usepackage{algorithm}

\usepackage{graphicx}
\usepackage[caption=false]{subfig}
\usepackage{multirow}
\usepackage{algpseudocode}

\usepackage[bookmarks=false,colorlinks=false,pdfborder={0 0 0}]{hyperref}
\usepackage{cite}
\usepackage{bm}
\usepackage{arydshln}
\usepackage{mathtools}
\usepackage{microtype}
\usepackage[figuresright]{rotating}
\usepackage{threeparttable}
\usepackage{booktabs}
\usepackage{verbatim}

\usepackage{amsthm} 
\newcommand{\ppmod}[1]{~({\rm mod~}#1)}
\newtheorem{construction}{Construction}
\newtheorem{example}{Example}
\theoremstyle{definition}
\newtheorem{definition}{Definition}
\usepackage{enumitem}

\newtheorem{theorem}{Theorem}
\newtheorem{proposition}[theorem]{Proposition}
\newtheorem{lemma}[theorem]{Lemma}

\def\BibTeX{{\rm B\kern-.05em{\sc i\kern-.025em b}\kern-.08em
		T\kern-.1667em\lower.7ex\hbox{E}\kern-.125emX}}

\long\def\symbolfootnote[#1]#2{\begingroup
	\def\thefootnote{\fnsymbol{footnote}}\footnote[#1]{#2}\endgroup}
\IEEEoverridecommandlockouts

\title{Redundancy-Optimal Constructions of $(1,1)$-Criss-Cross Deletion Correcting Codes with Efficient Encoding/Decoding Algorithms}

\author{Wenhao Liu, Zhengyi Jiang, Zhongyi Huang, Hanxu Hou
}

\begin{document}
\let\emph\textit
\maketitle
\pagestyle{empty}  
\thispagestyle{empty} 
\begin{abstract}\symbolfootnote[0]{
W. Liu, Z. Jiang and Z. Huang are with 
the Department of Mathematics Sciences, Tsinghua University, Beijing, China~(E-mail: wh-liu24@mails.tsinghua.edu.cn, jzy21@tsinghua.org.cn, zhongyih@tsinghua.edu.cn).
H. Hou is with Dongguan University of Technology, China~(E-mail: houhanxu@163.com). {\em (Corresponding author: Hanxu Hou.)}
This work was partially supported by the National Key R\&D Program of China (No. 2020YFA0712300), the National Natural Science Foundation of China (No. 62071121, 62371411, 12025104),Basic Research Enhancement Program of China under Grant 2021-JCJQ-JJ-0483.}
Two-dimensional error-correcting codes, where codewords are represented as $n \times n$ arrays over a $q$-ary alphabet, find important applications in areas such as QR codes, DNA-based storage, and racetrack memories. 
Among the possible error patterns, $(t_r,t_c)$-criss-cross deletions—where $t_r$ rows and $t_c$ columns are simultaneously deleted—are of particular significance. 
In this paper, we focus on $q$-ary $(1,1)$-criss-cross deletion correcting codes. 
We present a novel code construction and develop complete encoding, decoding, and data recovery algorithms for parameters $n \ge 11$ and $q \ge 3$. 
The complexity of the proposed encoding, decoding, and data recovery algorithms is $\mathcal{O}(n^2)$. 
Furthermore, we show that for $n \ge 11$ and $q = \Omega(n)$ (i.e., there exists a constant $c>0$ such that $q \ge cn$), both the code redundancy and the encoder redundancy of the constructed codes are
$2n + 2\log_q n + \mathcal{O}(1),$
which attain the lower bound ($2n + 2\log_q n - 3$) within an $\mathcal{O}(1)$ gap. 
To the best of our knowledge, this is the first construction that can achieve the optimal redundancy with only an $\mathcal{O}(1)$ gap, while simultaneously featuring explicit encoding and decoding algorithms.

\end{abstract}

\begin{IEEEkeywords}
Criss-cross deletions, explicit coding algorithms, code redundancy, encoder redundancy
\end{IEEEkeywords}

\IEEEpeerreviewmaketitle

\section{Introduction}
Two-dimensional error-correcting codes, whose codewords are stored in the form of two-dimensional arrays, play an important role in various applications such as QR codes~\cite{QR_1,QR_2}, DNA-based storage~\cite{DNA1}, and racetrack memories~\cite{rack1,rack2,rack3,rack4}. 
Existing studies on two-dimensional error-correcting codes can be broadly classified into four categories, namely, coding schemes designed to correct substitution errors~\cite{Existing1,Existing2,Existing3,Existing4,Existing5,Existing6,Existing7,Existing10,Existing11,Existing12}, erasure errors~\cite{Existing2,Existing12}, deletion errors~\cite{Existing8,Existing9,Existing12,Existing_Crisscross1,Existing_Crisscross2, 2021TIT_crisscross1,2021ISIT_crisscross2,2025arxiv_crisscross3,2022TIT_multi}, and insertion errors~\cite{Existing_Crisscross1, 2021TIT_crisscross1,2021ISIT_crisscross2,2025arxiv_crisscross3,2022TIT_multi}. This work focuses on the case of deletion errors.

Among the possible deletion patterns in two-dimensional error-correcting codes, \emph{criss-cross} deletions are among the most representative. Specifically, a \emph{\((t_r,t_c)\)-criss-cross deletion} refers to the removal of \(t_r\) entire rows and \(t_c\) entire columns from the codeword array, where $t_r, t_c \ge 1$ are integers. Let \(n\ge 1\) and \(q\ge 2\) be integers. A \emph{\(q\)-ary \((t_r,t_c)\)-criss-cross deletion correcting code} \(\mathcal{C} \subseteq \Sigma_q^{n\times n}\) is a set of codewords, each represented as an \(n\times n\) array over the alphabet \(\Sigma_q \triangleq \{0,1,\ldots,q-1\}\), such that any \((t_r,t_c)\)-criss-cross deletion can be corrected to recover the original symbols. Several previous works have investigated coding schemes against criss-cross deletion errors, highlighting their theoretical and practical significance~\cite{Existing_Crisscross1,Existing_Crisscross2,2021TIT_crisscross1,2021ISIT_crisscross2,2025arxiv_crisscross3,2022TIT_multi}.

The complete encoding and decoding procedure of a \(q\)-ary \((t_r,t_c)\)-criss-cross deletion correcting code $\mathcal{C}$ consists of the following steps:
\begin{enumerate}
    \item \label{step:msg} Given \( k \) data symbols over \(\Sigma_q\), where \(1 \le k \le n^2\), the encoding algorithm maps them to a codeword \(\bm{X} \in \mathcal{C} \subseteq \Sigma_q^{n\times n}\);
    \item When a \((t_r,t_c)\)-criss-cross deletion occurs, the resulting array is \(\bm{Y} \in \Sigma_q^{(n-t_r) \times (n-t_c)}\);
    \item Given only \(\bm{Y}\), the decoding algorithm reconstructs the original codeword \(\bm{X}\);
    \item Finally, the data recovery algorithm recovers the original \(k\) data symbols from \(\bm{X}\).
\end{enumerate}

For \(q\)-ary \((t_r,t_c)\)-criss-cross deletion correcting codes, one of the key metrics is \emph{redundancy}. In the existing literature, redundancy has been defined in two different ways. To avoid confusion, we summarize their definitions as follows.

\textbf{Code redundancy}~\cite{Existing_Crisscross1,2021TIT_crisscross1,2021ISIT_crisscross2} is defined as
\begin{equation*}
    r_{\mathcal{C}} \triangleq n^2 - \lfloor \log_q |\mathcal{C}| \rfloor,
\end{equation*} where $|\mathcal{C}|$ denotes the cardinality (i.e., the number of codewords) of the code $\mathcal{C}$.
This quantity depends only on the code size and characterizes the fundamental limit on the amount of data that can be carried by the codewords, without requiring an explicit encoding or decoding algorithm. Specifically, the maximum number of \(q\)-ary data symbols that can be carried by \(\mathcal{C}\) is $
    k^\star \triangleq \lfloor \log_q |\mathcal{C}| \rfloor.$

\textbf{Encoder redundancy}~\cite{2021TIT_crisscross1,2024Differential_VT_codes} is defined as follows:  
If a \(q\)-ary \((t_r,t_c)\)-criss-cross deletion correcting code \(\mathcal{C}\) is given together with explicit encoding, decoding, and data recovery algorithms, 
then the encoder redundancy of this code (with respect to the given implementation) is defined as
\begin{equation*}
    r_{\mathrm{ENC}} \triangleq n^2 - k,
\end{equation*}where \(k\) is the number of data symbols.
Since an explicit encoding procedure that maps \(k\) data symbols to codewords can generate \(q^k\) distinct codewords, there must be $q^k \le |\mathcal{C}|$,
which implies that
\begin{equation*}
    k \le \lfloor \log_q |\mathcal{C}| \rfloor = k^\star .
\end{equation*}
Therefore,
\begin{equation*}
    r_{\mathrm{ENC}} \ge n^2 - k^\star = r_{\mathcal{C}} .
\end{equation*}

Consequently, specifying an encoder redundancy \(r_{\mathrm{ENC}}\) for a code immediately implies that the corresponding code redundancy \(r_{\mathcal{C}}\) satisfies \(r_{\mathcal{C}} \le r_{\mathrm{ENC}}\). Moreover, it requires providing explicit encoding, decoding, and data recovery algorithms that achieve this redundancy. Hence, from both theoretical and practical perspectives, constructing codes with small encoder redundancy is more challenging and of greater significance.

For \(q\)-ary \((t_r=1,t_c=1)\)-criss-cross deletion correcting codes, the case that this paper focuses on, it was shown in~\cite{2021TIT_crisscross1} that the code redundancy has a lower bound of
\begin{equation*}
    2n + 2\log_q n - 3.
\end{equation*}
 As the encoder redundancy is always at least as large as the code redundancy, it is subject to the same lower bound. Existing works that attempt to approach this lower bound can be divided into two categories:

\begin{itemize}
    \item \emph{Code redundancy}: these works do not provide explicit encoding algorithms, but only code constructions together with decoding algorithms~\cite{Existing_Crisscross1,2021TIT_crisscross1,2021ISIT_crisscross2,2025arxiv_crisscross3}. The best known result is given in~\cite{2021ISIT_crisscross2}, achieving a code redundancy of
    \[
        2n + 2\log_q n + \mathcal{O}(\log\log n).
    \]
    \item \emph{Encoder redundancy}: the only known work in this category is~\cite{2021TIT_crisscross1}, which provides explicit encoding, decoding, and data recovery algorithms. However, its encoder redundancy is not tight, and moreover, as will be clarified in the forthcoming \textbf{Remark}, the decoding procedure described therein cannot handle all possible $(1,1)$-criss-cross deletion patterns in certain cases. Consequently, to the best of our knowledge, there is currently no known construction whose encoder redundancy is both tight and capable of correcting all types of $(1,1)$-criss-cross deletion patterns.

\end{itemize}

\textbf{Remark.} A construction achieving code redundancy of \(2n + 2\log_q n + \mathcal{O}(1)\) was proposed in~\cite{2025arxiv_crisscross3}, but, as discussed in Appendix~\ref{appendix:2025problem}, its decoding algorithm exhibits certain ambiguities that may pose challenges for direct implementation.
On the other hand, the encoder redundancy presented in~\cite{2021TIT_crisscross1} is \(2n + 9\log_q n + 12 + 2\log_2 e\), where \(e\) denotes the base of the natural logarithm. However, as noted in Appendix~\ref{appendix:2021TIT_problem}, the decoding procedure therein may not succeed for all deletion patterns and therefore requires further refinement. 
In this paper, we provide the first explicit construction that achieves correct decoding for all $(1,1)$-criss-cross deletion patterns together with optimal encoder redundancy (within an $\mathcal{O}(1)$ gap to the lower bound).

Our main contributions are summarized as follows:

\begin{enumerate}
    \item For $q$-ary $(t_r=1,t_c=1)$-criss-cross deletion correcting codes, we propose a novel code construction together with a complete encoding, decoding, and data recovery procedure for the parameter range $n \ge 11$ and $q \ge 3$. The proposed algorithms achieve an encoding and decoding complexity of $\mathcal{O}(n^2)$. To the best of our knowledge, this is the first construction of $(1,1)$-criss-cross deletion correcting codes that provides explicit encoding and decoding algorithms.
    
    \item For the proposed construction, when $n \ge 11$ and $q = \Omega(n)$ (i.e., there exists a constant $c>0$ such that $q \ge c n$), both the code redundancy and the encoder redundancy are
$$
2n + 2\log_q n + \mathcal{O}(1),
$$
which shows that both redundancies reach the lower bound within an $\mathcal{O}(1)$ gap. In particular, 
this construction surpasses the best-known redundancy of $2n + 2\log_q n + \mathcal{O}(\log\log n)$~\cite{2021ISIT_crisscross2}. 

\end{enumerate}

\section{Preliminaries}
In this section, we first recall 
the $q$-ary Differential VT codes~\cite{2024Differential_VT_codes}, which are important one-dimensional codes capable of correcting a single deletion or a single insertion  and will serve as building blocks in the construction of our two-dimensional codes.
Then, we review relevant results on $(1,1)$-criss-cross deletion correcting codes~\cite{2021TIT_crisscross1,2025arxiv_crisscross3}.

We begin by defining some notations. Let $|\cdot|$ denote the cardinality of a set. For positive integers $q\ge 2$ and $n$, let $\Sigma_q = \{0,1,\ldots,q-1\}$ denote the $q$-ary alphabet, and let $\Sigma_q^n$ denote the set of all sequences of length $n$ over $\Sigma_q$. Denote by $\mathbb{Z}_n$ the ring of integers modulo $n$, and for integers $i \le j$, define $[i:j] = \{i, i+1, \ldots, j\}$. Let $\Sigma_q^{n \times n}$ denote the set of all $n\times n$ arrays with entries in $\Sigma_q$.

For a vector $\bm{x} = (x_1, x_2, \ldots, x_n) \in \Sigma_q^n$ and integers $i_1 \le i_2$, denote by $
\bm{x}_{[i_1:i_2]} \triangleq (x_{i_1}, x_{i_1+1}, \ldots, x_{i_2})
$ the subsequence of $\bm{x}$ from position $i_1$ to $i_2$. For an array $\bm{X} \in \Sigma_q^{n\times n}$, let $\bm{X}_{i,j}$ denote the element in the $i$-th row and the $j$-th column. Denote by $\bm{X}_{i,[j_1:j_2]}$ the sequence formed by the elements in the $i$-th row from column $j_1$ to $j_2$, and by $\bm{X}_{[i_1:i_2],j}$ the sequence formed by the elements in the $j$-th column from row $i_1$ to $i_2$. Finally, let $\bm{X}_{[i_1:i_2],[j_1:j_2]}$ denote the subarray consisting of rows $i_1$ to $i_2$ and columns $j_1$ to $j_2$ of $\bm{X}$.

\subsection{$q$-ary Differential VT Codes}
\label{subsec:pre_one_dimension}

Before introducing $q$-ary Differential VT codes, we first formalize the definitions of a single deletion and a single insertion in a sequence~\cite{1965VTcode}.

Given a sequence $\bm{x}=(x_1,\ldots,x_n)\in \Sigma_q^n$:

\begin{itemize}
    \item $\bm{x}$ \emph{is said to suffer a single deletion} if exactly one symbol is removed from $\bm{x}$. Formally, there exists $1 \le i \le n$ such that the resulting sequence is
    $$
    (x_1,\ldots,x_{i-1},x_{i+1},\ldots,x_n).
    $$
    
    \item $\bm{x}$ \emph{is said to suffer a single insertion} if exactly one symbol is inserted into $\bm{x}$. Formally, there exist $1 \le i \le n+1$ and $y \in \Sigma_q$ such that the resulting sequence is
    $$
    (x_1,\ldots,x_{i-1},y,x_i,\ldots,x_n).
    $$
\end{itemize}

We then review the definitions of the differential vector and the VT syndrome~\cite{2024Differential_VT_codes}, which are used to construct the $q$-ary Differential VT codes.

For $\bm{x}=(x_1,x_2,\ldots,x_n)\in \Sigma_q^n$, the \emph{differential vector} of $\bm{x}$ is denoted by $\bm{y}=\mathrm{Diff}(\bm{x})=(y_1,y_2,\ldots,y_n)\in \Sigma_q^n$, where
\[
y_i =
\begin{cases}
x_i - x_{i+1} \ppmod q, & i=1,2,\ldots,n-1, \\[6pt]
x_n, & i=n .
\end{cases}
\]

This mapping is bijective, and its inverse $\bm{x}=\mathrm{Diff}^{-1}(\bm{y})$ is given by

$$
x_i \;=\;
\begin{cases}
\displaystyle \sum_{j=i}^n y_j\ppmod{q}, & i=1,2,\ldots,n-1, \\[8pt]
y_n, & i=n .
\end{cases}
$$

The \emph{VT syndrome} of a $q$-ary sequence $\bm{x}=(x_1,x_2,\ldots,x_n)\in \Sigma_q^n$ is defined as

$$
\mathrm{Syn}(\bm{x}) \triangleq \sum_{i=1}^n i x_i.
$$

Given $a \in \mathbb{Z}_{qn}$, the $q$-ary Differential VT codes with length $n$ (denoted as $\mathrm{Diff\_VT}_{a}(n;q)$) \cite{2024Differential_VT_codes} is defined as

\begin{align*}
&\mathrm{Diff\_VT}_{a}(n;q)
\triangleq \Bigl\{ \bm{x}\in \Sigma_q^n : 
\mathrm{Syn}(\mathrm{Diff}(\bm{x})) \equiv a \ppmod{qn} \Bigr\}.
\end{align*}

Two fundamental properties of Differential VT codes are summarized below.

\begin{lemma}{\cite[Lemma~2]{2024Differential_VT_codes}}\label{lemma:DVT_lemma}
Given $n\ge1$, $q\ge2$, and $a\in \mathbb{Z}_{qn}$, for any $\bm{x}\in \mathrm{Diff\_VT}_{a}(n;q)$, it holds that
$     \sum_{i=1}^n x_i \equiv a \ppmod{q}.
    $
\end{lemma}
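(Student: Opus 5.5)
We show directly that $\mathrm{Syn}(\mathrm{Diff}(\bm{x}))$ is congruent to $\sum_{i=1}^n x_i$ modulo $q$. Reducing the defining congruence $\mathrm{Syn}(\mathrm{Diff}(\bm{x})) \equiv a \ppmod{qn}$ modulo $q$, which is legitimate because $q \mid qn$, then gives the claim.

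Let $\bm{y}=\mathrm{Diff}(\bm{x})$. Since $y_i \equiv x_i - x_{i+1} \ppmod q$ for $1\le i\le n-1$ and $y_n = x_n$, we compute
\begin{align*}
\mathrm{Syn}(\bm{y}) = \sum_{i=1}^n i\,y_i
&\equiv \sum_{i=1}^{n-1} i\,(x_i - x_{i+1}) + n x_n \ppmod q\\
&= \sum_{i=1}^{n-1} i\,x_i - \sum_{i=2}^{n} (i-1)\,x_i + n x_n \\
&= x_1 + \sum_{i=2}^{n-1} x_i - (n-1)x_n + n x_n \\
&= \sum_{i=1}^n x_i .
\end{align*}
The congruence in the first line holds modulo $q$ because each $y_i$ agrees with $x_i-x_{i+1}$ only modulo $q$, and multiplying by the integer weight $i$ preserves congruence modulo $q$. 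The remaining steps are exact integer identities obtained by Abel summation (telescoping).

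Combining this with $\mathrm{Syn}(\bm{y}) \equiv a \ppmod{qn}$, and hence $\mathrm{Syn}(\bm{y}) \equiv a \ppmod q$, yields $\sum_{i=1}^n x_i \equiv a \ppmod q$. The only point needing care is that the reduction of $y_i$ modulo $q$ is compatible with working modulo $q$ rather than modulo $qn$; this is exactly why the conclusion holds only modulo $q$. The case $n=1$ is immediate, since then $\bm{y}=\bm{x}$ and $\mathrm{Syn}(\bm{y}) = x_1$.
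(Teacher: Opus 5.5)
Your proof is correct. The Abel-summation identity $\sum_{i=1}^{n} i\,y_i \equiv \sum_{i=1}^{n} x_i \ppmod{q}$, followed by reducing the defining congruence from modulus $qn$ to modulus $q$, is exactly the standard argument. The paper does not reprove this lemma; it only cites it from the Differential VT paper. Your telescoping computation is equivalent to writing $\sum_{i} i\,y_i=\sum_{j=1}^{n}\sum_{i=j}^{n} y_i$ and applying the inverse formula $x_j\equiv\sum_{i=j}^{n} y_i \ppmod{q}$.
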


\begin{theorem}{\cite[Theorem~3]{2024Differential_VT_codes}}\label{theorem:2024DVT}
Given any codeword $\bm{x} \in \mathrm{Diff\_VT}_{a}(n;q)$, if $\bm{x}$ suffers a single deletion or a single insertion, there exists a linear-time decoding algorithm that can uniquely reconstruct the original codeword $\bm{x}$.
\end{theorem}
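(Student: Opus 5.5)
The plan is to reduce the statement to the classical Varshamov--Tenengolts decoding argument, applied to the differential vector instead of to $\bm{x}$ itself. Write $\bm{y}=\mathrm{Diff}(\bm{x})$, so that $x_i=\sum_{j\ge i} y_j \ppmod q$ and, with the convention $x_{n+1}=0$, $y_i=x_i-x_{i+1}$. Suppose $\bm{x}$ suffers a single deletion at position $i$ and we receive $\bm{x}'$ of length $n-1$. The first step is to find the deleted symbol. By Lemma~\ref{lemma:DVT_lemma}, $\sum_i x_i\equiv a \ppmod q$, so the deleted value is $v=x_i\equiv a-\sum_j x'_j \ppmod q$. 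The insertion case is symmetric: the inserted value is $\sum_j x'_j - a \ppmod q$.

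The second step is to find where $v$ goes. I will compute $\bm{y}'=\mathrm{Diff}(\bm{x}')$, again with $x'_{n}=0$, and compare $\mathrm{Syn}(\bm{y}')$ with $a$. Deleting $x_i$ changes the differential vector locally: the two entries $y_{i-1},y_i$ are replaced by the single entry $x_{i-1}-x_{i+1}=y_{i-1}+y_i$, and every entry $y_j$ with $j>i$ shifts left by one position. The change in the syndrome therefore equals $\sum_{j\ge i} y_j$ (as an integer sum of representatives in $[0,q-1]$) plus a small correction coming from the merged entry. Modulo $q$, the quantity $\sum_{j\ge i}y_j$ equals $x_i=v$, which is already known. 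Modulo $qn$, I expect the deficiency $D=a-\mathrm{Syn}(\bm{y}')$ to take the form $v+q\cdot m$, where $m$ counts the number of "wraparounds" in the tail, i.e.\ positions $j\ge i$ with $x_j<x_{j+1}$ as integers. The candidate set of insertion positions consists of those $p$ for which the reconstructed sequence has the correct syndrome. The key monotonicity claim is that, as $p$ decreases, the syndrome of $\mathrm{Diff}(\bm{x}'$ with $v$ inserted at $p)$ is nondecreasing and strictly increases except across runs where inserting $v$ gives the same sequence. Moreover, its total range is less than $qn$, so the congruence modulo $qn$ pins down a unique resulting sequence.

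The main obstacle is this monotonicity and range argument. Unlike binary VT codes, the differential representation mixes modular arithmetic with integer syndromes. I would handle it by writing the syndrome of the candidate as a function $f(p)$ and computing $f(p)-f(p+1)$ explicitly. This difference should equal $q\cdot[\,\text{indicator that } x'_p \text{ lies cyclically between } v \text{ and } x'_{p+1}\,]$, or a similar nonnegative quantity in $\{0,q\}$, and it should be zero exactly when the two insertions give the same sequence. Summing these differences bounds the range by $q(n-1)<qn$. Decoding then scans $p$ in $O(n)$ time, which yields the linear-time claim. The insertion case follows from the same computation read in reverse: delete from the length-$(n+1)$ sequence each occurrence position of the recovered value, and use the same monotone function to select the unique candidate.
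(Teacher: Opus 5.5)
The paper does not prove this statement; it cites it from Nguyen--Cai--Siegel. So your argument has to stand on its own, and it has a genuine gap.

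Your first step is fine. Lemma~\ref{lemma:DVT_lemma} gives the deleted value $v\equiv a-\sum_j x'_j \pmod q$, and the inserted value symmetrically. The telescoping identity $\sum_{j\ge i}y_j=v+qm$ is also correct. The problem is the claim that the merged entry contributes only a small correction. Doing the bookkeeping exactly, deleting $x_i$ gives
\[
\mathrm{Syn}(\mathrm{Diff}(\bm{x}))-\mathrm{Syn}(\mathrm{Diff}(\bm{x}'))=\sum_{j\ge i}y_j+q(i-1)\epsilon,\qquad \epsilon=[\,y_{i-1}+y_i\ge q\,]\in\{0,1\},
\]
with $\epsilon=0$ if $i=1$. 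Hence $D\equiv v+q\bigl(m+(i-1)\epsilon\bigr) \pmod{qn}$, not $v+qm$. The extra term is of order $qn$: it is the analogue of the position term that appears in binary VT codes when a $1$ is deleted. It breaks your key monotonicity claim.

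Here is a concrete counterexample. Take $q=7$, $n=4$, $\bm{x}'=(2,5,3)$ and $v=0$. Inserting $v$ at $p=4,3,2,1$ gives $(2,5,3,0)$, $(2,5,0,3)$, $(2,0,5,3)$, $(0,2,5,3)$, with syndromes $17,38,24,31$. The successive differences are $3q,-2q,q$, so they are neither nonnegative nor in $\{0,q\}$.

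Your range bound happens to be true, but a range below $qn$ says nothing about uniqueness. Injectivity of the syndrome over distinct candidates is exactly what the (false) strict monotonicity was supposed to supply, and your insertion case reuses the same step. The $O(n)$ scan over $p$ is a valid decoder once uniqueness is known; what is missing is the proof of uniqueness.

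The missing idea is to separate the two layers of the syndrome. Telescoping with $x_{n+1}=0$ gives
\[
\mathrm{Syn}(\mathrm{Diff}(\bm{x}))=\sum_{i=1}^n x_i+q\sum_{i=1}^{n-1} i\,\alpha_i,\qquad \alpha_i=[\,x_i<x_{i+1}\,].
\]
Once $v$ is known, the integer $\sum_i x_i=\sum_j x'_j+v$ is known. The congruence modulo $qn$ then fixes $\sum_i i\alpha_i \bmod n$, so the ascent vector $\bm{\alpha}\in\{0,1\}^{n-1}$ lies in a binary VT code with modulus $n$.

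Deleting $x_i$ replaces the pair $(\alpha_{i-1},\alpha_i)$ by $[\,x_{i-1}<x_{i+1}\,]$, which always equals one of the two. So $\bm{\alpha}(\bm{x}')$ is $\bm{\alpha}$ with a single bit deleted, and $\epsilon$ is the value of that deleted bit. Moreover $m+(i-1)\epsilon\in[0,n-1]$ is exactly the binary VT deficiency, which you can read off as $((D-v)\bmod qn)/q$. Comparing it with the weight of $\bm{\alpha}(\bm{x}')$ is the standard binary VT rule, and this recovers $\bm{\alpha}$. Monotonicity is legitimate at this level, but only separately for $\epsilon=0$ and $\epsilon=1$.

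A Tenengolts-type argument then places $v$. The inserted ascent bit lies in a known maximal run of $\bm{\alpha}$, over which the corresponding segment of $\bm{x}$ is strictly increasing or non-increasing. So $v$ has only one admissible position, up to positions that produce the same sequence. This gives uniqueness and a linear-time decoder. The insertion case follows symmetrically, or from the equivalence of single-deletion and single-insertion correction.
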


According to Theorem~\ref{theorem:2024DVT}, the Differential VT codes guarantee recovery of the original codeword under a single deletion, although the precise deletion position may remain ambiguous. 
As shown in~\cite{2024Differential_VT_codes}, by imposing an additional constraint on the codewords, one can ensure that not only the original codeword but also the exact deletion position can be uniquely located, as stated in the following proposition.

\begin{proposition}{\cite{2024Differential_VT_codes}}\label{prop:unique_deletion_position}
Let $\bm{x}=(x_1,x_2,\ldots,x_n)\in \mathrm{Diff\_VT}_{a}(n;q)$ be such that adjacent symbols are distinct, i.e., $x_i \neq x_{i+1}$ for all $i=1,2,\ldots,n-1$.
If $\bm{x}$ suffers a single deletion, then the original sequence $\bm{x}$ can be uniquely recovered, and the deletion position can also be identified.
\end{proposition}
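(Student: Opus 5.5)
Recovery of $\bm{x}$ itself is already supplied by Theorem~\ref{theorem:2024DVT}, so the plan is to take the reconstructed $\bm{x}$ as known and show that, under the no-adjacent-repeats hypothesis, the deletion position is forced. Let $\bm{x}'$ be the received length-$(n-1)$ sequence. By Theorem~\ref{theorem:2024DVT} we compute the unique $\bm{x}\in\mathrm{Diff\_VT}_{a}(n;q)$ that produced $\bm{x}'$. It remains to show that exactly one index $i\in[1:n]$ satisfies
\[
(x_1,\ldots,x_{i-1},x_{i+1},\ldots,x_n)=\bm{x}'.
\]

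The key observation is the standard run structure of single deletions. Deleting $x_i$ or $x_j$ with $i<j$ gives the same sequence if and only if $x_i=x_{i+1}=\cdots=x_j$, i.e.\ $i$ and $j$ lie in the same run. To verify this, compare the two outputs coordinatewise. They agree on positions $1,\ldots,i-1$ and $j+1,\ldots,n$. On the positions in between, the first output reads $x_{i+1},\ldots,x_j$ and the second reads $x_i,\ldots,x_{j-1}$. So equality holds iff $x_k=x_{k+1}$ for all $k\in[i:j-1]$. Because adjacent symbols of $\bm{x}$ are distinct, every run has length one, so two distinct indices never produce the same output, and the deletion position is unique.

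Constructively, the position is found by a linear scan. Let $i$ be the first index with $x_i\neq x'_i$, or $i=n$ if no such index exists among the first $n-1$ coordinates. Then the deleted index is $i$. Indeed, the deleted index $d$ satisfies $x_k=x'_k$ for $k<d$, and $x'_d=x_{d+1}\neq x_d$, so the first disagreement occurs exactly at $d$ (or $d=n$ if there is none).

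The only point needing care is notational rather than mathematical: the proposition concerns uniqueness of the position given $\bm{x}$, and the Differential VT property is used only for recovering $\bm{x}$. No step is a genuine obstacle; the main thing to get right is the equivalence "same output $\iff$ same run," and in particular that its "only if" direction uses all the intermediate equalities.
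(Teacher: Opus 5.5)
Your proof is correct. The paper gives no argument of its own for this proposition and simply attributes it to \cite{2024Differential_VT_codes}, so there is no competing proof to compare against. Your route is the standard justification behind that citation. You first obtain $\bm{x}$ from Theorem~\ref{theorem:2024DVT}. You then observe that deleting $x_i$ versus $x_j$ ($i<j$) yields the same word if and only if $x_i=x_{i+1}=\cdots=x_j$, which the 1-RLL hypothesis rules out.

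It also makes explicit that the Differential VT syndrome is needed only to recover $\bm{x}$, while uniqueness of the position comes purely from the run structure. Your first-mismatch scan supplies the $\mathcal{O}(n)$ locating step that Algorithm~\ref{alg:decoding} relies on to find $i^*$ and $j^*$.
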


\subsection{$(1,1)$-Criss-Cross Deletion Correcting Codes}
\label{subsec:pre_two_dimension}

This subsection introduces the $(1,1)$-criss-cross deletion correcting codes~\cite{2021TIT_crisscross1, 2025arxiv_crisscross3}.

Given $\bm{X} \in \Sigma_q^{n\times n}$, i.e.,

$$
\bm{X} = 
\begin{bmatrix}
    X_{1,1} & X_{1,2} & \cdots & X_{1,n} \\
    X_{2,1} & X_{2,2} & \cdots & X_{2,n} \\
    \vdots  & \vdots  & \ddots & \vdots \\
    X_{n,1} & X_{n,2} & \cdots & X_{n,n}
\end{bmatrix},
$$
$\bm{X}$ is said to suffer a \emph{$(1,1)$-criss-cross deletion} at location $(i,j)$, where $1\le i,j\le n$, if the $i$-th row and $j$-th column of $\bm{X}$ are deleted. The resulting $(n-1)\times(n-1)$ array is

$$
\begin{bmatrix}
    \bm{X}_{[1:i-1],[1:j-1]} & \bm{X}_{[1:i-1],[j+1:n]} \\
    \bm{X}_{[i+1:n],[1:j-1]} & \bm{X}_{[i+1:n],[j+1:n]}
\end{bmatrix}.
$$

The set of all possible arrays obtained from $\bm{X}$ by a $(1,1)$-criss-cross deletion is called the \emph{$(1,1)$-criss-cross deletion ball} of $\bm{X}$, denoted by $\mathbb{D}_{1,1}(\bm{X})$.

A code $\mathcal{C} \subseteq \Sigma_q^{n\times n}$ is called a \emph{$q$-ary $(1,1)$-criss-cross deletion correcting code} if, for any $\bm{X} \neq \bm{Y} \in \mathcal{C}$,

$$
\mathbb{D}_{1,1}(\bm{X}) \cap \mathbb{D}_{1,1}(\bm{Y}) = \emptyset.
$$

It is straightforward to see that a code defined in this way can correct a $(1,1)$-criss-cross deletion. A \emph{$q$-ary $(1,1)$-criss-cross insertion correcting code} can be defined similarly; see~\cite{2021TIT_crisscross1, 2025arxiv_crisscross3} for details. In the following, we omit the prefix “$q$-ary” when it does not cause ambiguity.

Lemma~\ref{lemma:del_insert_equivalence} shows that a code capable of correcting $(1,1)$-criss-cross deletions can also correct $(1,1)$-criss-cross insertions. Therefore, we focus exclusively on $(1,1)$-criss-cross deletion correcting codes.

\begin{lemma}{\cite[Lemma~4]{2025arxiv_crisscross3}}\label{lemma:del_insert_equivalence}
A code $\mathcal{C} \subseteq \Sigma_q^{n \times n}$ is a $(1,1)$-criss-cross deletion correcting code if and only if it is a $(1,1)$-criss-cross insertion correcting code.
\end{lemma}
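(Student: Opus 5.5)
The plan is to reduce both directions to the statement that deletion balls intersect if and only if insertion balls intersect. Let $\mathbb{I}_{1,1}(\bm{X}) \subseteq \Sigma_q^{(n+1)\times(n+1)}$ denote the set of arrays obtained from $\bm{X}$ by inserting one row and one column. A code corrects $(1,1)$-criss-cross insertions iff $\mathbb{I}_{1,1}(\bm{X})\cap\mathbb{I}_{1,1}(\bm{Y})=\emptyset$ for all distinct $\bm{X},\bm{Y}\in\mathcal{C}$. It therefore suffices to show, for any $\bm{X}\neq\bm{Y}\in\Sigma_q^{n\times n}$,
\[
\mathbb{D}_{1,1}(\bm{X})\cap\mathbb{D}_{1,1}(\bm{Y})\neq\emptyset \iff \mathbb{I}_{1,1}(\bm{X})\cap\mathbb{I}_{1,1}(\bm{Y})\neq\emptyset .
\]
This mirrors the classical one-dimensional Levenshtein argument, and I would prove each direction by an explicit construction.

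\textbf{($\Rightarrow$)} Suppose $\bm{Z}$ lies in both deletion balls. Then $\bm{Z}$ arises from $\bm{X}$ by deleting row $i_1$ and column $j_1$, and from $\bm{Y}$ by deleting row $i_2$ and column $j_2$. Handle rows and columns independently.
\begin{itemize}
\item \emph{Rows.} Applying the 1D fact to rows as symbols, $\bm{X}$ and $\bm{Y}$ are both row-supersequences (one-row insertions) of the common row sequence of $\bm{Z}$. I would construct an $(n+1)$-row sequence containing both $\bm{X}$'s rows and $\bm{Y}$'s rows as one-deletion subsequences. Assuming WLOG $i_1\le i_2$, take rows $1..i_1-1$ of $\bm{Z}$, then $\bm{X}$'s row $i_1$, then $\bm{Z}$'s rows $i_1..i_2-1$, then $\bm{Y}$'s row $i_2$, then the rest of $\bm{Z}$.
\item \emph{Columns.} Do the same with $j_1,j_2$.
\end{itemize}

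Doing both simultaneously is the delicate point. I would build $\bm{W}\in\Sigma_q^{(n+1)\times(n+1)}$ whose rows are indexed by the merged row order and whose columns are indexed by the merged column order. Each entry of $\bm{W}$ is then determined as follows.
\begin{itemize}
\item Entries at positions present in $\bm{Z}$ are copied from $\bm{Z}$.
\item Entries in the inserted $\bm{X}$-row or $\bm{X}$-column, excluding the $\bm{Y}$-specific column or row, are copied from $\bm{X}$.
\item Entries in the $\bm{Y}$-row or $\bm{Y}$-column are copied from $\bm{Y}$.
\item The two cross entries (the $\bm{X}$-row meeting the $\bm{Y}$-column, and vice versa) are arbitrary.
\end{itemize}
One then checks that deleting the $\bm{Y}$-specific row and column from $\bm{W}$ gives $\bm{X}$, and symmetrically gives $\bm{Y}$. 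The consistency check uses only that $\bm{X}$ minus $(i_1,j_1)$ equals $\bm{Z}$ and $\bm{Y}$ minus $(i_2,j_2)$ equals $\bm{Z}$.

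\textbf{($\Leftarrow$)} Suppose $\bm{W}$ is in both insertion balls. Then $\bm{X}=\bm{W}$ with row $a_1$ and column $b_1$ deleted, and $\bm{Y}=\bm{W}$ with row $a_2$ and column $b_2$ deleted. Deleting rows $\{a_1,a_2\}$ and columns $\{b_1,b_2\}$ from $\bm{W}$ gives an $(n-1)\times(n-1)$ array $\bm{Z}$ that is a one-row, one-column deletion of both $\bm{X}$ and $\bm{Y}$. Degenerate cases $a_1=a_2$ or $b_1=b_2$ are handled by deleting any extra row or column; the result is still common to both arrays, since those arrays then agree in that coordinate structure.

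The main obstacle is the bookkeeping in the forward direction. The cases depend on the relative order of $i_1,i_2$ and of $j_1,j_2$, including equality. Equality is easy: the 1D merge then places the two inserted rows adjacently. I would first treat $i_1\le i_2$, $j_1\le j_2$ and then argue by the symmetry of swapping $\bm{X}$ and $\bm{Y}$ and transposing.
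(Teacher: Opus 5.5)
Your proof is correct. The paper contains no proof of this lemma to compare against: it imports the statement by citation from \cite{2025arxiv_crisscross3}. So your argument stands as a self-contained justification. It is the classical Levenshtein argument lifted to two dimensions. It works because row deletions and column deletions act on independent index sets and commute. You can therefore do the one-dimensional merge separately for rows and for columns and glue the results; only the two cross entries (the $\bm{X}$-row meeting the $\bm{Y}$-column, and vice versa) are left free. Both directions then check out, including the consistency of the copied entries with $\bm{X}$ minus $(i_1,j_1)=\bm{Z}=\bm{Y}$ minus $(i_2,j_2)$.

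There is one inaccuracy to fix. Your closing claim is that swapping $\bm{X},\bm{Y}$ and transposing reduces everything to $i_1\le i_2$, $j_1\le j_2$. That is false. Each of these operations sends the mixed pattern ($i_1<i_2$, $j_1>j_2$) to another mixed pattern, so the aligned case is never reached. Your "WLOG $i_1\le i_2$" for rows followed by "do the same" for columns uses the $\bm{X}\leftrightarrow\bm{Y}$ swap twice, which you cannot do simultaneously. No reduction is actually needed. Merge rows and columns each in whichever order the inserted lines occur; your entry-assignment rule never refers to that order, so the same verification covers all four sign patterns. Alternatively, add column reversal to your symmetries, since it commutes with both deletion and insertion.

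In the reverse direction, justify the degenerate case directly rather than by "agreeing in coordinate structure." Suppose $a_1=a_2$ (hence $b_1\neq b_2$, since otherwise $\bm{X}=\bm{Y}$). Choose any row $a'\neq a_1$ of $\bm{W}$. Deleting rows $\{a_1,a'\}$ and columns $\{b_1,b_2\}$ from $\bm{W}$ is the same as deleting row $a'$ and column $b_2$ from $\bm{X}$. It is also the same as deleting row $a'$ and column $b_1$ from $\bm{Y}$. The case $b_1=b_2$ is symmetric.
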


The following theorem provides a lower bound on the code redundancy of $(1,1)$-criss-cross deletion correcting codes.

\begin{theorem}{\cite[Theorem~11]{2021TIT_crisscross1}}\label{theorem:theoretic_bound}
Let $r_{\mathcal{C}}$ denote the code redundancy of a $q$-ary $(1,1)$-criss-cross deletion correcting code $\mathcal{C}$. Then

$$
r_{\mathcal{C}} \ge 2n + 2 \log_q(n) - 3 .
$$

\end{theorem}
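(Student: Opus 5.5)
The plan is a sphere-packing (counting) argument on the deletion balls $\mathbb{D}_{1,1}(\bm{X})$, restricted to a large, well-behaved subset of codewords. Since $\mathcal{C}$ is a $(1,1)$-criss-cross deletion correcting code, the balls $\mathbb{D}_{1,1}(\bm{X})$, $\bm{X}\in\mathcal{C}$, are pairwise disjoint subsets of $\Sigma_q^{(n-1)\times(n-1)}$. Hence for any subset $\mathcal{C}'\subseteq\mathcal{C}$ we have
\[
|\mathcal{C}'|\cdot\min_{\bm{X}\in\mathcal{C}'}|\mathbb{D}_{1,1}(\bm{X})|\le q^{(n-1)^2}.
\]
I will take $\mathcal{C}'=\mathcal{C}\cap\mathcal{G}$, where $\mathcal{G}$ is the set of "good" arrays, namely those whose balls are of size close to $n^2$. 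Then $|\mathcal{C}|\le |\mathcal{C}\setminus\mathcal{G}|+|\mathcal{C}'|\le q^{n^2}-|\mathcal{G}|+q^{(n-1)^2}/\min_{\mathcal{G}}|\mathbb{D}_{1,1}|$.

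\textbf{Step 1: lower-bounding the ball size.} Choose $\mathcal{G}$ to be the arrays whose rows are pairwise distinct enough and whose columns are pairwise distinct enough. A convenient condition: any two adjacent rows differ and any two adjacent columns differ, and more strongly, runs are controlled so that deleting row $i$ versus row $i'$ (in different runs of equal rows) yields distinct results. The number of distinct single-row deletions equals the number of row runs, and similarly for columns. The main claim to verify is that, for good arrays, the results for different pairs $(i,j)$ with $i$ and $j$ in different runs are distinct. If two deletions at $(i,j)$ and $(i',j')$ give the same array, then comparing entries forces structured equalities between shifted rows and columns. I would require that no two rows at distance at most $2$ agree on more than $n-3$ columns, and similarly for columns. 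Under that requirement, equal outcomes force $i=i'$ and $j=j'$, so $|\mathbb{D}_{1,1}(\bm{X})|\ge n^2$, or at least $\ge c n^2$ for an absolute constant $c$.

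\textbf{Step 2: counting bad arrays.} A union bound shows the fraction of bad arrays is small. Two fixed rows agreeing in all but at most $2$ positions has probability about $\binom{n}{2}q^{-(n-2)}$, and there are $O(n)$ such row pairs, so the bad fraction is at most $O(n^3q^{-(n-2)})$, which is at most $1/2$ for moderate $n$. Thus $|\mathcal{G}|\ge q^{n^2}/2$. Small $n$ are handled by adjusting constants, and the $-3$ slack absorbs these constants.

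\textbf{Step 3: combining.} The crude bound $|\mathcal{C}|\le|\mathcal{C}\setminus\mathcal{G}|+|\mathcal{C}'|$ is useless as written if $\mathcal{C}$ contains many bad arrays. Instead I would apply a refined counting: a weighted sum $\sum_{\bm{X}\in\mathcal{C}}1/|\mathbb{D}|$-style argument, or a bound of the bad arrays by themselves. The bad set has size at most $q^{n^2}\cdot O(n^3 q^{-n+2})$. That is larger than $q^{n^2-2n}/n^2$, so it cannot simply be discarded, and a finer stratification is needed. I would stratify arrays by their row-run count $r$ and column-run count $s$, using $|\mathbb{D}_{1,1}(\bm{X})|\gtrsim rs$ under a mild genericity condition. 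Arrays with few runs are very rare: the number of arrays with $r$ row runs is about $\binom{n-1}{r-1}q^{rn}$. Summing $q^{(n-1)^2}/(rs)$ weighted appropriately bounds $|\mathcal{C}|$ by $O(q^{(n-1)^2}/n^2)$. Taking $\log_q$ gives $r_{\mathcal{C}}\ge 2n-1+2\log_q n-O(1)$, and the constant is tracked to reach $-3$.

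\textbf{Main obstacle.} The main obstacle is Step 1: proving that distinct deletion locations give distinct outcomes under a genericity condition weak enough that the non-generic arrays are few. The difficulty is the interaction of a row shift and a column shift simultaneously. I would first try to show that the collision equations force some row to agree with a neighbouring row on almost all entries, giving the desired contradiction.
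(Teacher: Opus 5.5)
The paper gives no proof of this bound; it quotes it from \cite{2021TIT_crisscross1}, so your outline has to stand on its own. Your framework is sound: the balls $\mathbb{D}_{1,1}(\bm{X})$, $\bm{X}\in\mathcal{C}$, are disjoint in $\Sigma_q^{(n-1)\times(n-1)}$, so you need $|\mathbb{D}_{1,1}(\bm{X})|$ of order $n^2$ for essentially all codewords. However, the lemma that carries the content is missing, and Step~1 is false as stated.

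Write $D_{i,j}(\bm{X})$ for the array left after deleting row $i$ and column $j$. For $i<i'$ and $j<j'$, one checks directly that $D_{i,j}(\bm{X})=D_{i',j'}(\bm{X})$ holds if and only if three conditions hold:
\begin{enumerate}
\item the rows outside $[i:i']$ are constant on columns $[j:j']$;
\item the columns outside $[j:j']$ are constant on rows $[i:i']$;
\item $\bm{X}_{k+1,l+1}=\bm{X}_{k,l}$ for $k\in[i:i'-1]$, $l\in[j:j'-1]$.
\end{enumerate}
For $j>j'$ the last relation becomes $\bm{X}_{k+1,l}=\bm{X}_{k,l+1}$. This gives two counterexamples to your Step~1:
\begin{itemize}
\item A generic Toeplitz array $\bm{X}_{k,l}=g(l-k)$ satisfies your condition, since its adjacent rows are shifts of each other and agree in only about $n/q$ places. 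Yet $D_{1,1}(\bm{X})=D_{n,n}(\bm{X})$.
\item A collision with $i'-i=j'-j=2$ only forces rows and columns at distance at most $2$ to agree in $n-3$ places, which your condition explicitly permits.
\end{itemize}
So the claim that equal outcomes force $(i,j)=(i',j')$ fails.

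The real gap is Step~3. As you note, non-generic codewords cannot be discarded. You therefore need $|\mathbb{D}_{1,1}(\bm{X})|\ge c\,n^2$ (or $\ge c\,r(\bm{X})s(\bm{X})$ in your stratified form) for every array outside a set of size $O(q^{(n-1)^2}/n^2)$. For the constant to come out as $-3$, $c$ must be essentially at least $q^{-2}$.

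Consider the cheapest collision not caused by equal adjacent rows or columns, namely $i'-i=j'-j=1$. It imposes only $2n-3$ independent equalities, so about $n^2q^{n^2-2n+3}$ arrays have one. That is a factor $n^4q^2$ more than you may discard. Your unspecified ``mild genericity condition'' therefore has to tolerate such collisions, and you give no argument bounding how many of them an array can have. That lemma is the whole proof.

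One way around it is to put the genericity on the received array instead. By disjointness,
\[
|\mathcal{C}|=\sum_{\bm{X}\in\mathcal{C}}\ \sum_{\bm{Y}\in\mathbb{D}_{1,1}(\bm{X})}\frac{1}{|\mathbb{D}_{1,1}(\bm{X})|}\le\sum_{\bm{Y}\in\Sigma_q^{(n-1)\times(n-1)}}\ \max_{\bm{X}:\,\bm{Y}\in\mathbb{D}_{1,1}(\bm{X})}\frac{1}{|\mathbb{D}_{1,1}(\bm{X})|}.
\]
Each bad $\bm{Y}$ now contributes at most $1$, so bad arrays need only have density $o(n^{-2})$. For example, take $L=\lceil 5\log_q n\rceil+1$ and call $\bm{Y}$ good if:
\begin{itemize}
\item adjacent rows and adjacent columns of $\bm{Y}$ differ in more than $L$ places;
\item no row or column contains a constant run of length $L$;
\item no two adjacent rows or columns are shifts of each other in all but at most two positions.
\end{itemize}
Take a good $\bm{Y}$ and any $\bm{X}$ obtained from it by inserting a row $i_0$ and a column $j_0$. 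The characterization above shows that $\bm{X}$ has only the following collisions:
\begin{itemize}
\item $(i,j_0)\sim(i,j_0\pm1)$;
\item $(i_0,j)\sim(i_0\pm1,j)$;
\item $O(1)$ diagonal ones next to $(i_0,j_0)$.
\end{itemize}
Hence $|\mathbb{D}_{1,1}(\bm{X})|\ge n^2-4n-O(1)$. This gives $|\mathcal{C}|\le(1+o(1))\,q^{(n-1)^2}/n^2$, that is, $r_{\mathcal{C}}\ge 2n-1+2\log_q n-o(1)$.

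Either way the argument is asymptotic. Saying ``the $-3$ slack absorbs the constants'' does not settle moderate $n$. Only for $n\le q$ does the trivial bound $|\mathcal{C}|\le q^{(n-1)^2}$ already give the claim.
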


\section{Construction and Decoding Algorithm}\label{sec:construction}

In this section, we present the construction of our $q$-ary $(1,1)$-criss-cross deletion correcting codes, and provide the corresponding decoding algorithm.

\subsection{Construction}

\begin{figure}
    \centering
    \includegraphics[width=.6\linewidth]{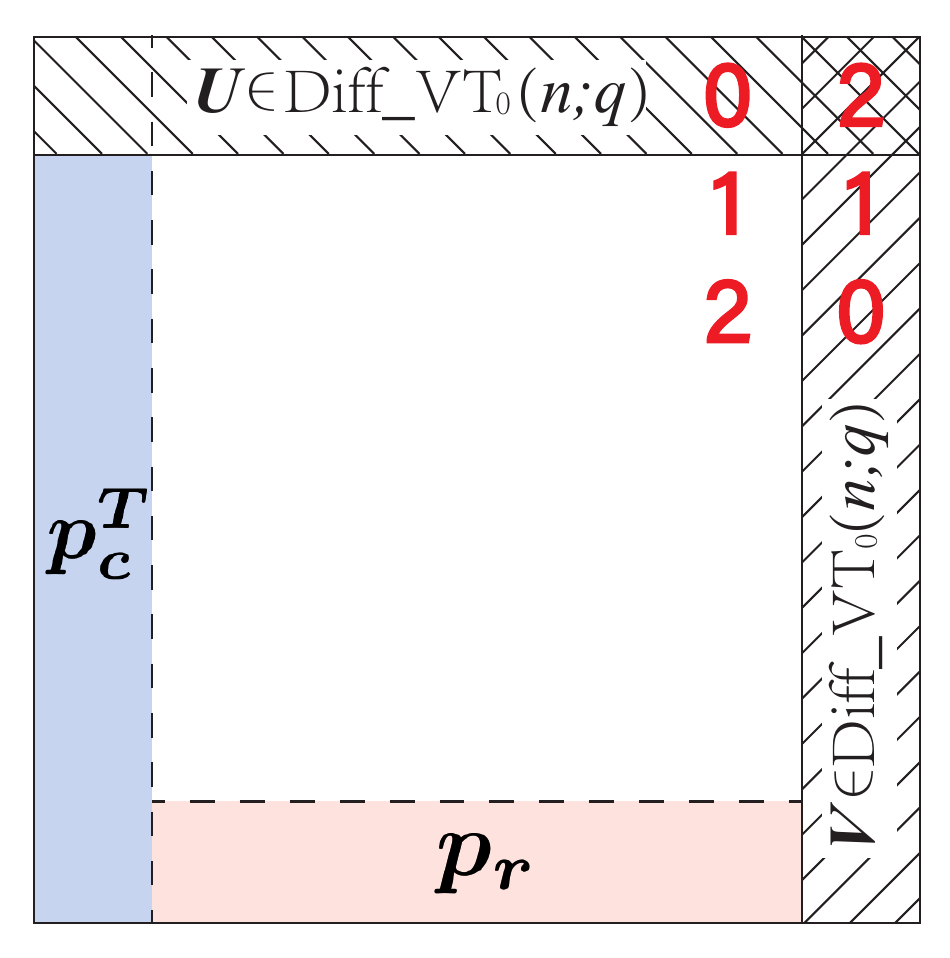}
    \caption{Illustration of the structure of the proposed $q$-ary $(1,1)$-criss-cross deletion correcting code. The first row and the reversed last column are encoded using differential VT codes, and specific entries and modular sum constraints are imposed to ensure unique recovery from any $(1,1)$-criss-cross deletion.}
    \label{fig:1}
\vspace{-0.5cm}
	\end{figure}

We give the construction of our code as follows, which is illustrated in Fig.~\ref{fig:1}.

\begin{construction}\label{constrcution:ourcode}
 Given integers $q \ge 3$, $n \ge 4$, and the $q$-ary alphabet $\Sigma_q$. The $q$-ary $(1,1)$-criss-cross deletion correcting code $\mathcal{C}$ is defined as the set of all $n \times n$ arrays $\bm{X} \in \Sigma_q^{n\times n}$ satisfying the following five conditions.

\begin{enumerate}
\item The first row of $\bm{X}$, denoted by $\bm{U} = (u_1, \ldots, u_n)$ with $u_i = \bm{X}_{1,i}$ for $i = 1,2,\ldots,n$, is required to satisfy $\bm{U} \in \mathrm{Diff\_VT}_0(n;q)$, and
\[
u_i \neq u_{i+1} \ \text{for all } i=1,2,\ldots,n-1,
\]
together with the suffix constraints
\[
u_{n-1} = 0, \quad u_n = 2.
\]

\item The reversal of the last column of $\bm{X}$, denoted by $\bm{V} = (v_1, \ldots, v_n)$ with $v_i = \bm{X}_{n-i+1,n}$ for $i = 1,2,\ldots,n$, is required to satisfy $\bm{V} \in \mathrm{Diff\_VT}_0(n;q)$, and
\[
v_i \neq v_{i+1} \ \text{for all } i=1,2,\ldots,n-1,
\]
together with the suffix constraints
\[
v_{n-2} = 0, \quad v_{n-1} = 1, \quad v_n = 2.
\]

\item The entries are required to satisfy the fixed values
\[
\bm{X}_{2,n-1} = 1, \quad \bm{X}_{3,n-1} = 2.
\]

\item The vector $\bm{p_c} \triangleq (s_2, s_3, \ldots, s_n)$, where $s_i = \bm{X}_{i,1}$ for $i=2,3,\ldots,n$, is required to satisfy
\[
s_i + \sum_{j=2}^n \bm{X}_{i,j} \equiv 0 \ppmod{q}, \quad i=2,3,\ldots,n.
\]

\item The vector $\bm{p_r}\triangleq (t_2, t_3, \ldots, t_{n-1})$, where $t_i = \bm{X}_{n,i}$ for $i=2,3,\ldots,n-1$, is required to satisfy
\[
t_i + \sum_{j=1}^{n-1} \bm{X}_{j,i} \equiv 0 \ppmod{q}, \quad i=2,3,\ldots,n-1.
\]

\end{enumerate}
\end{construction}

Next, we present an example to illustrate the new construction, which will also serve as a running example throughout the subsequent encoding and decoding procedures.

\begin{example}\label{example:encoding}
Consider the case with parameters $n=9$ and $q=7$. 
Let 
\[\bm{X}=
\begin{bmatrix}
4 & 2 & 1 & 4 & 5 & 2 & 1 & 0 & 2 \\
5 & 0 & 2 & 4 & 4 & 0 & 4 & 1 & 1 \\
5 & 1 & 2 & 6 & 0 & 3 & 2 & 2 & 0 \\
5 & 6 & 1 & 0 & 3 & 2 & 4 & 6 & 1 \\
2 & 4 & 3 & 5 & 6 & 1 & 6 & 1 & 0 \\
3 & 1 & 3 & 2 & 3 & 2 & 2 & 6 & 6 \\
5 & 0 & 0 & 5 & 0 & 5 & 5 & 3 & 5 \\
3 & 0 & 6 & 2 & 3 & 3 & 0 & 5 & 6 \\
3 & 0 & 3 & 0 & 4 & 3 & 4 & 4 & 0
\end{bmatrix}.
\]

It is easy to verify that the first row
\[
\bm{U} = (4,2,1,4,5,2,1,0,2),
\]
belongs to $\mathrm{Diff\_VT}_0(9;7)$, satisfies the constraint $u_i \neq u_{i+1}$ for all $i\in\{1,2,\ldots,8\}$, and also satisfies the suffix constraints $u_{8}=0$, $u_{9}=2$.  

Similarly, the reversal of the last column is
\[
\bm{V} = (0,6,5,6,0,1,0,1,2),
\]
which belongs to $\mathrm{Diff\_VT}_0(9;7)$, satisfies the constraint $v_i \neq v_{i+1}$ for all $i\in\{1,2,\ldots,8\}$, and satisfies the suffix constraints $v_{7}=0$, $v_{8}=1$, $v_{9}=2$.  

Next, the fixed entries condition is satisfied since
\[
\bm{X}_{2,8} = 1, \quad \bm{X}_{3,8} = 2.
\]

Moreover, the vector $\bm{p_c} = (s_2, \ldots, s_9) = (5,5,5,2,3,5,3,3)$ satisfies
\[
s_i + \sum_{j=2}^9 \bm{X}_{i,j} \equiv 0 \ppmod{7}, \quad i=2,3,\ldots,9,
\]
and the vector $\bm{p_r} = (t_2, \ldots, t_{8}) = (0,3,0,4,3,4,4)$ satisfies
\[
t_i + \sum_{j=1}^8 \bm{X}_{j,i} \equiv 0 \ppmod{7}, \quad i=2,3,\ldots,8.
\]

Therefore, the array $\bm{X}$ indeed belongs to the codeword of \textbf{Construction~\ref{constrcution:ourcode}}. 
\end{example}

\subsection{Decoding Algorithm}\label{subsec:2Ddecoding}

In this section, we present the decoding algorithm for \textbf{Construction~\ref{constrcution:ourcode}}. To this end, we first establish the following lemma.

\begin{lemma}\label{lemma:sum0}
For any codeword $\bm{X} \in \mathcal{C}$ in \textbf{Construction~\ref{constrcution:ourcode}},

\begin{align*}
\sum_{i=1}^n \bm{X}_{i,j} &\equiv 0 \ppmod{q}, \quad \forall j = 1,2,\ldots,n, \\
\sum_{j=1}^n \bm{X}_{i,j} &\equiv 0 \ppmod{q}, \quad \forall i = 1,2,\ldots,n.
\end{align*}
\end{lemma}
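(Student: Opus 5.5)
We verify the row sums first, then the column sums, using Lemma~\ref{lemma:DVT_lemma} for the first row and the last column and the parity conditions 4) and 5) of Construction~\ref{constrcution:ourcode} for everything else. Note first that Lemma~\ref{lemma:DVT_lemma} with $a=0$ gives $\sum_{i=1}^n u_i\equiv 0 \ppmod q$ and $\sum_{i=1}^n v_i\equiv 0\ppmod q$. These are exactly the statements that the first row sum and the last column sum vanish modulo $q$, since $\bm{V}$ is just a reordering of the last column.

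\textbf{Rows.} For $i=1$ the claim is the statement for $\bm{U}$ just noted. For $i=2,\ldots,n$, condition 4) says $s_i+\sum_{j=2}^n\bm{X}_{i,j}\equiv0$. Since $s_i=\bm{X}_{i,1}$, this is exactly $\sum_{j=1}^n\bm{X}_{i,j}\equiv 0\ppmod q$.

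\textbf{Columns.} For $j=n$ the claim is the statement for $\bm{V}$. For $j=2,\ldots,n-1$, condition 5) says $t_j+\sum_{i=1}^{n-1}\bm{X}_{i,j}\equiv0$. Since $t_j=\bm{X}_{n,j}$, this is exactly the $j$-th column sum.

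\textbf{Column $j=1$.} This is the only case not handled directly by a constraint, and it needs a double-counting argument. Let $S=\sum_{i,j}\bm{X}_{i,j}$. Summing the row identities gives $S\equiv 0\ppmod q$. On the other hand, $S$ equals the sum of all column sums. Columns $2,\ldots,n$ each contribute $0$ modulo $q$, so the first column sum $\sum_{i=1}^n\bm{X}_{i,1}\equiv S\equiv 0\ppmod q$.

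The main care point is therefore this last case. Beyond that, one only needs to check that the index ranges in conditions 4) and 5) match exactly the rows and columns not already covered by the Differential VT constraints. The overlapping corner entries ($\bm{X}_{1,n}$, $\bm{X}_{n,n}$, $\bm{X}_{n,1}$) cause no conflict, because each identity used is an exact restatement of one line sum.
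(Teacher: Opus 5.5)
Your proof is correct and follows the paper's argument essentially step for step. You use Lemma~\ref{lemma:DVT_lemma} for the first row and the last column, conditions 4) and 5) for the remaining rows and for columns $2,\ldots,n-1$, and the double count of the total sum (first over rows, then over columns) to handle the first column.
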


\begin{proof}
Since the first row $\bm{U}$ of $\bm{X}$ belongs to $\mathrm{Diff\_VT}_0(n;q)$, we have $\sum_{j=1}^n \bm{X}_{1,j} \equiv 0 \ppmod{q}$ by Lemma~\ref{lemma:DVT_lemma}.
Similarly, since $V \in \mathrm{Diff\_VT}_0(n;q)$, we have
$
\sum_{i=1}^n \bm{X}_{i,n} \equiv 0 \ppmod{q}.
$

Then, by conditions (4) and (5) in \textbf{Construction~\ref{constrcution:ourcode}}, it follows that

$$
\sum_{j=1}^n \bm{X}_{i,j} \equiv 0 \ppmod{q}, \quad \forall i = 2,3,\ldots,n,
$$

and

$$
\sum_{i=1}^n \bm{X}_{i,j} \equiv 0 \ppmod{q}, \quad \forall j = 2,3,\ldots,n-1.
$$

Finally, note that

$$
\sum_{j=1}^n \Bigl(\sum_{i=1}^n \bm{X}_{i,j}\Bigr) \equiv \sum_{i=1}^n \Bigl(\sum_{j=1}^n \bm{X}_{i,j}\Bigr) \equiv 0 \ppmod{q},
$$
where the second equality makes use of the row-sum constraints, we obtain

$$
\sum_{i=1}^n \bm{X}_{i,1} \equiv 0 - \sum_{j=2}^{n} \Bigl(\sum_{i=1}^{n} \bm{X}_{i,j}\Bigr) \equiv 0 \ppmod{q}.
$$

Combining all the above, the lemma is proved.
\end{proof}

When a codeword $\bm{X} \in \mathcal{C}$ suffers a $(1,1)$-criss-cross deletion, the resulting array is denoted by $\bm{Y} \in \Sigma_q^{(n-1) \times (n-1)}$. The decoding algorithm aims to recover the original codeword $\bm{X}$ from $\bm{Y}$. Algorithm~\ref{alg:decoding} presents the decoding procedure for \textbf{Construction~\ref{constrcution:ourcode}}, where $\bm{x}^T$ denotes the transpose of the vector $\bm{x}$ and $\bm{Y}_{i,j}$ represents the entry in the $i$-th row and $j$-th column of $\bm{Y}$. The correctness of this algorithm and its computational complexity are established in Theorem~\ref{theorem:decoding_correct}.

\begin{algorithm}
\caption{Decoding Algorithm for \textbf{Construction~\ref{constrcution:ourcode}}}
\label{alg:decoding}
\begin{algorithmic}[1]
\Require Received array $\bm{Y} \in \Sigma_q^{(n-1) \times (n-1)}$
\Ensure Original codeword $\bm{X} \in \mathcal{C}$
\If {$\bm{Y}_{1,n-1} < \bm{Y}_{2,n-1}$}
\State Extend $\bm{Y}$ as
$         \bm{Y} \leftarrow \begin{bmatrix} \bm{Y} & \bm{y} \end{bmatrix},
    $
where $\bm{y} = (y_1,\ldots,y_{n-1})^T$ with
$         y_i = -\sum_{j=1}^{n-1} \bm{Y}_{i,j} \ppmod q$, for $i = 1,2,\ldots,n-1$.
\EndIf
\State Denote by $\bm{V}' = (v_1, \ldots, v_{n-1})$ the reversal of the last column of $\bm{Y}$. By \textbf{Construction~\ref{constrcution:ourcode}}, $\bm{V}'$ is obtained from $\bm{V} \in \mathrm{Diff\_VT}_0(n;q)$ by a single deletion, where all adjacent symbols in $\bm{V}$ are distinct. Hence, by Proposition~\ref{prop:unique_deletion_position}, we can apply the decoding algorithm of the Differential VT codes, which uniquely determines the position of the deletion. Denote the corresponding row index by $i^*$.

\State Let $n'$ be the current number of columns of $\bm{Y}$, and compute $\bm{r} = (r_1, \ldots, r_{n'})$, where

$$
    r_j = -\sum_{i=1}^{n-1} \bm{Y}_{i,j} \ppmod q, \quad j = 1,2,\ldots,n'.
$$

\State Insert $\bm{r}$ as the $i^*$-th row of $\bm{Y}$:

$$
    \bm{Y} \leftarrow
    \begin{bmatrix}
        \bm{Y}_{[1:i^*-1], [1:n']} \\
        \bm{r} \\
        \bm{Y}_{[i^*:n-1], [1:n']}
    \end{bmatrix}.
$$

\If {$n' = n-1$}
\State Denote the first row of $\bm{Y}$ be $\bm{U}' = (u_1, \ldots, u_{n-1})$. By \textbf{Construction~\ref{constrcution:ourcode}}, $\bm{U}'$ is obtained from $\bm{U} \in \mathrm{Diff\_VT}_0(n;q)$ by a single deletion, with adjacent symbols distinct. By Proposition~\ref{prop:unique_deletion_position}, we can apply the decoding algorithm of the Differential VT codes to determine the position of the deletion, and denote the column index by $j^*$.
\State Compute
$         \bm{c} = (c_1, \ldots, c_n)$, where $$c_i = -\sum_{j=1}^{n-1} \bm{Y}_{i,j} \ppmod q, \quad i = 1,2,\ldots,n.$$
\State Insert $\bm{c}$ as the $j^*$-th column of $\bm{Y}$:
$$         \bm{Y} \leftarrow
        \begin{bmatrix}
            \bm{Y}_{[1:n], [1:j^*-1]} & \bm{c}^T & \bm{Y}_{[1:n], [j^*:n-1]}
        \end{bmatrix}.
    $$
\EndIf
\State \Return $\bm{Y}$
\end{algorithmic}
\end{algorithm}

\begin{theorem}\label{theorem:decoding_correct}
Algorithm~\ref{alg:decoding} correctly reconstructs any codeword 
$\bm{X} \in \mathcal{C}$ from its deleted array $\bm{Y}$, with 
complexity $\mathcal{O}(n^2)$.
\end{theorem}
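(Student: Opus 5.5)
The plan is to split according to whether the deleted column is the last one, and to show that the test in line~1 of Algorithm~\ref{alg:decoding} tells the two cases apart exactly. Suppose $\bm{X}\in\mathcal{C}$ suffers a $(1,1)$-criss-cross deletion at $(i,j)$. By condition~(2) of Construction~\ref{constrcution:ourcode}, the top of the last column of $\bm{X}$ is
\[
(\bm{X}_{1,n},\bm{X}_{2,n},\bm{X}_{3,n})=(v_n,v_{n-1},v_{n-2})=(2,1,0).
\]
By conditions~(1) and~(3), the top of column $n-1$ is
\[
(\bm{X}_{1,n-1},\bm{X}_{2,n-1},\bm{X}_{3,n-1})=(u_{n-1},1,2)=(0,1,2).
\]

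I will compute the pair $(\bm{Y}_{1,n-1},\bm{Y}_{2,n-1})$ in every case, treating entries as integers in $\{0,\ldots,q-1\}$; note $q\ge 3$ makes $0,1,2$ distinct.
\begin{itemize}
\item If $j\ne n$, the last column of $\bm{Y}$ is column $n$ of $\bm{X}$ with entry $i$ removed. The pair is $(1,0)$ for $i=1$, $(2,0)$ for $i=2$, and $(2,1)$ for $i\ge 3$. The test fails in every case.
\item If $j=n$, the last column of $\bm{Y}$ is column $n-1$ of $\bm{X}$ with entry $i$ removed. The pair is $(1,2)$ for $i=1$, $(0,2)$ for $i=2$, and $(0,1)$ for $i\ge 3$. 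The test succeeds in every case.
\end{itemize}
Thus line~1 succeeds if and only if $j=n$.

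Next I check that, after line~1, the last column of $\bm{Y}$ is always column $n$ of $\bm{X}$ with row $i$ removed. When $j\ne n$ this holds directly. When $j=n$, each surviving row of $\bm{X}$ sums to $0$ modulo $q$ by Lemma~\ref{lemma:sum0}. Hence the appended entry $y_k$ equals the missing entry of column $n$ in that row, and the extended $\bm{Y}$ is exactly $\bm{X}$ with only row $i$ deleted.

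In either case, the reversed last column $\bm{V}'$ is $\bm{V}$ with one deletion. Since $\bm{V}\in\mathrm{Diff\_VT}_0(n;q)$ has distinct adjacent symbols, Proposition~\ref{prop:unique_deletion_position} recovers the deletion position $k$ uniquely. Translating through the reversal gives $i^*=n-k+1=i$; this index bookkeeping must be done carefully.

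With $i^*=i$ known, the column-sum constraints of Lemma~\ref{lemma:sum0} show that $r_\ell$ equals the deleted entry $\bm{X}_{i,\ell}$ of each surviving column. Inserting $\bm{r}$ at position $i^*$ therefore restores $\bm{X}$ in the case $j=n$. In the case $n'=n-1$ (i.e.\ $j\ne n$), it restores $\bm{X}$ with only column $j$ deleted.

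In the latter case, the first row is now $\bm{U}$ with one deletion. Condition~(1) and Proposition~\ref{prop:unique_deletion_position} give $j^*=j$, and the row-sum constraints of Lemma~\ref{lemma:sum0} show that $\bm{c}$ is exactly column $j$ of $\bm{X}$. This yields $\bm{X}$.

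For complexity, every row- and column-sum vector costs $\mathcal{O}(n^2)$ in total, each DVT decoding is linear by Theorem~\ref{theorem:2024DVT}, and each insertion is $\mathcal{O}(n^2)$. The overall cost is therefore $\mathcal{O}(n^2)$.

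The main obstacle I expect is the exhaustive verification that the comparison in line~1 separates the cases, especially the boundary cases $i\in\{1,2,3\}$. This is precisely where the fixed entries $\bm{X}_{2,n-1}=1$, $\bm{X}_{3,n-1}=2$ and the suffix constraints are used. A secondary point of care is the reversal of index conventions when converting a deletion position in $\bm{V}'$ to the row index $i^*$.
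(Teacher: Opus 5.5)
Your proposal is correct and follows the paper's proof essentially step for step. Both arguments use the same six-case analysis of $(\bm{Y}_{1,n-1},\bm{Y}_{2,n-1})$ to detect deletion of the last column, then decode the reversed last column to get $i^*$, recover the deleted row from column sums, and (when $j\ne n$) decode the first row and recover the deleted column from row sums. Your version is in fact slightly more explicit than the paper's: you attribute each of the six pairs to a specific deleted row index $i$ and spell out the conversion $i^*=n-k+1$, which the paper leaves implicit.
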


\begin{proof}
We first prove that Steps~1--3 of Algorithm~\ref{alg:decoding} correctly determine whether the last column was deleted and, if so, recover $n-1$ symbols of the deleted column (all except the symbol in the deleted row, which cannot yet be recovered). Consider all possible values of the pair $(\bm{Y}_{1,n-1}, \bm{Y}_{2,n-1})$ under different deletion patterns. One can verify that all possible cases are $(2,1),(2,0),(1,0),(0,1),(0,2),(1,2)$. The first three correspond to the case where the last column was not deleted, while the latter three correspond to the case where the last column was deleted. Hence, the condition $\bm{Y}_{1,n-1} < \bm{Y}_{2,n-1}$ holds if and only if the last column was deleted. If this is the case,  Step~2 correctly reconstructs all entries of the last column except the one located in the deleted row by Lemma~\ref{lemma:sum0}. 

Next, we show that Steps~4–6 correctly identify the index of the deleted row and recover the $n'$ symbols of that row, where $n' \in \{n-1, n\}$. At the beginning of Step~4, regardless of whether the last column was deleted, the last column of the current array (after the possible extension in Step~2) coincides with the last column of $\bm{X}$ with exactly one symbol removed—namely, the entry located in the deleted row. Let $\bm{V}'=(v_1,\ldots,v_{n-1})$ be the reversal of this column. By \textbf{Construction~\ref{constrcution:ourcode}}, $\bm{V}$, defined as the reversal of the last column of $\bm{X}$, lies in $\mathrm{Diff\_VT}_0(n;q)$ and has all adjacent symbols distinct, and $\bm{V}'$ is obtained from $\bm{V}$ by a single deletion. Hence, by Proposition~\ref{prop:unique_deletion_position}, the decoding algorithm of the Differential VT codes uniquely locates the position of the deletion, which in turn yields the index $i^*$ of the deleted row in the two-dimensional code. Steps~5--6 then reconstructs the symbols of row $i^*$ using the modulo-$q$ sum constraints in Lemma~\ref{lemma:sum0} and inserts the recovered row. If the last column was the one deleted, Step~2 has already reconstructed $n-1$ entries of that column; knowing $i^*$ allows us to fill the remaining entry as well, so the entire row is recovered (i.e., $n$ symbols). Otherwise, when a different column was deleted, all entries of row $i^*$ except the one in the deleted column are recovered (i.e., $n-1$ symbols).

At this point, two situations may arise. If the last column was deleted, then $\bm{X}$ has already been fully recovered, corresponding to the case $n'=n$, and no further action is needed. Otherwise, when another column was deleted, the deleted row has been partially recovered (with $n-1$ symbols known), corresponding to the case $n'=n-1$. In this case, it remains to recover the deleted column. Since the row deletion has already been corrected, the first row of $\bm{Y}$ is obtained from the first row of $\bm{X}$ by a single deletion. Similarly, by applying the decoding algorithm of the Differential VT codes, we can identify the index $j^*$ of the deleted column. Lemma~\ref{lemma:sum0} then allows us to reconstruct the entire deleted column. Steps~8–10 implement these operations.

In conclusion, Algorithm~\ref{alg:decoding} correctly recovers both the deleted row and the deleted column. Since the complexity of decoding the one-dimensional Differential VT codes is $\mathcal{O}(n)$~\cite{2024Differential_VT_codes}, while the complexity of the remaining operations is $\mathcal{O}(n^2)$, the overall complexity of the algorithm is $\mathcal{O}(n^2)$. This completes the proof.
\end{proof}

We illustrate the decoding procedure for the code in Example~\ref{example:encoding}. 

Suppose that the codeword $\bm{X}$ suffers a $(1,1)$-criss-cross deletion at position $(9,9)$, i.e., the 9-th row and 9-th column are deleted, resulting in the array
\[
\bm{Y} = 
\begin{bmatrix}
4 & 2 & 1 & 4 & 5 & 2 & 1 & 0 \\
5 & 0 & 2 & 4 & 4 & 0 & 4 & 1 \\
5 & 1 & 2 & 6 & 0 & 3 & 2 & 2 \\
5 & 6 & 1 & 0 & 3 & 2 & 4 & 6 \\
2 & 4 & 3 & 5 & 6 & 1 & 6 & 1 \\
3 & 1 & 3 & 2 & 3 & 2 & 2 & 6 \\
5 & 0 & 0 & 5 & 0 & 5 & 5 & 3 \\
3 & 0 & 6 & 2 & 3 & 3 & 0 & 5
\end{bmatrix}.
\]
We now demonstrate how to apply Algorithm~\ref{alg:decoding} to recover the original codeword $\bm{X}$. 

In Step~1, we verify that $\bm{Y}_{1,8} < \bm{Y}_{2,8}$ holds. Hence, the algorithm identifies that the last column has been deleted. 

In Step~2, we compute the vector 
\[
\bm{y} = (2,1,0,1,0,6,5,6),
\] 
and insert it into the last column of $\bm{Y}$. After this insertion, the resulting array (still denoted by $\bm{Y}$) is an $8 \times 9$ array given by
\[
\bm{Y} =
\begin{bmatrix}
4 & 2 & 1 & 4 & 5 & 2 & 1 & 0 & {\color{red}2} \\
5 & 0 & 2 & 4 & 4 & 0 & 4 & 1 & {\color{red}1} \\
5 & 1 & 2 & 6 & 0 & 3 & 2 & 2 & {\color{red}0} \\
5 & 6 & 1 & 0 & 3 & 2 & 4 & 6 & {\color{red}1} \\
2 & 4 & 3 & 5 & 6 & 1 & 6 & 1 & {\color{red}0} \\
3 & 1 & 3 & 2 & 3 & 2 & 2 & 6 & {\color{red}6} \\
5 & 0 & 0 & 5 & 0 & 5 & 5 & 3 & {\color{red}5} \\
3 & 0 & 6 & 2 & 3 & 3 & 0 & 5 & {\color{red}6} \\
\end{bmatrix}.
\]

In Step~4, we extract the vector 
\[
\bm{V}' = (6,5,6,0,1,0,1,2).
\] 
It can be verified that $\bm{V}'$ is obtained from the reversal of the last column of the original codeword array 
\[
\bm{V} = (0,6,5,6,0,1,0,1,2)
\] 
by deleting the first element. By applying the decoding algorithm of the Differential VT codes to $\bm{V}'$, we find that the deleted symbol is the first element of $\bm{V}$. 
Since $\bm{V}$ is the reversal of the last column of the original codeword array $\bm{X}$, this first element in $\bm{V}$ is located in the last row of $\bm{X}$. 
Hence, we conclude that the last row of $\bm{X}$ was deleted, and we set $i^* = 9$ in Step~4.

Next, in Step~5, we obtain $n' = 9$ and compute 
\[
\bm{r} = (3,0,3,0,4,3,4,4,0).
\] 
By inserting $\bm{r}$ as the last row of $\bm{Y}$, we can verify that the resulting array coincides with the original codeword $\bm{X}$.  
In Step~7, since $n' = 9 \ne 8$, the algorithm directly returns the current $\bm{Y}$.  
Therefore, Algorithm~\ref{alg:decoding} successfully recovers the original codeword.

\section{One-Dimensional 1-RLL Differential VT Code with Suffix Constraint}\label{sec:1RLLDVT}

In order to encode the $(1,1)$-criss-cross deletion correcting code introduced in Section~\ref{sec:construction}, we first propose a new class of one-dimensional deletion correcting codes, termed the \emph{$q$-ary 1-RLL Differential VT code with suffix constraint}. Here, \emph{1-RLL} refers to the run-length-limited constraint of length one, meaning that adjacent symbols are required to be distinct (see~\cite{2024Differential_VT_codes}). These codes are specifically designed to correct one single deletion and identify its location. These codes serve as building blocks for the encoding procedure of \textbf{Construction~\ref{constrcution:ourcode}}. 

We first give a formal definition of the $q$-ary 1-RLL Differential VT code with suffix constraint. We then present the corresponding encoding, decoding, and data recovery algorithms.

\subsection{The $q$-ary 1-RLL Differential VT Code with Suffix Constraint}

In \textbf{Construction~\ref{constrcution:ourcode}}, the vectors $\bm{U}$ and $\bm{V}$ are not only required to be codewords of a one-dimensional Differential VT code, but also to satisfy two additional properties:  
(i) adjacent symbols must be distinct (the 1-RLL property); and  
(ii) the last few entries are fixed to prescribed values.  
We combine these requirements into the following definition of a new class of one-dimensional deletion correcting codes.

\begin{definition}[$q$-ary 1-RLL Differential VT Code with Suffix Constraint]
Let $q \ge 3$, $n \ge 2$, $m\ge1$ be integers, $a \in \mathbb{Z}_{q(n+m)}$ be a syndrome parameter, and $\bm{b} = (b_1, \ldots, b_m) \in \Sigma_q^m$ be a suffix sequence satisfying $b_i \ne b_{i+1}$ for all $1 \le i \le m-1$. The \emph{$q$-ary 1-RLL Differential VT code with suffix constraint}, denoted as
\[
\mathrm{1\text{-}RLL\_Diff\_VT}_a(n,m; \bm{b}),
\]
is the set of sequences $\bm{x} \in \Sigma_q^{n+m}$ that satisfy the following conditions:
\begin{enumerate}
  \item $\mathrm{Syn}(\mathrm{Diff}(\bm{x})) \equiv a \ppmod{q(n+m)}$;
  \item $x_i \ne x_{i+1}$ for all $1 \le i \le n+m-1$;
  \item $(x_{n+1}, \ldots, x_{n+m}) = (b_1, \ldots, b_m)$.
\end{enumerate}
\end{definition}

Clearly, in \textbf{Construction~\ref{constrcution:ourcode}}, the vectors $\bm{U}$ and $\bm{V}$ belong to this class of codes. In particular,  
\begin{align*}
\bm{U} \in \mathrm{1\text{-}RLL\_Diff\_VT}_0(n-2,2;(0,2)), \\
\bm{V} \in \mathrm{1\text{-}RLL\_Diff\_VT}_0(n-3,3;(0,1,2)).
\end{align*}

\subsection{Encoding Algorithm of Code $\mathrm{1\text{-}RLL\_Diff\_VT}_a(n,m; \bm{b})$}\label{subsec:1RLLDVT_encoding}

We now present an encoding procedure that maps a sequence of data symbols to a valid codeword in $\mathrm{1\text{-}RLL\_Diff\_VT}_a(n,m; \bm{b})$.

More precisely, for given parameters $n \ge 8$, $m \le 3$, $q \ge 3$, and a fixed suffix sequence $\bm{b} \in \Sigma_q^m$ with $b_i \ne b_{i+1}$ for all $1 \le i \le m-1$, let $t=\lfloor \log_{q-1} n \rfloor$, we aim to map the sequence of data symbols

$$
(f_1, f_2, \ldots, f_k) \in \mathbb{Z}_{q-1}^{k}, 
\quad \text{where } k = n - t - 4,
$$
into a valid codeword in the set $\mathrm{1\text{-}RLL\_Diff\_VT}_a(n,m; \bm{b})$.
Note that when $n \ge 8$ and $q \ge 3$, we have $n-t-4 \ge 1$, and hence the above sequence of data symbols is well defined.


Algorithm~\ref{alg:1RLLDVT_encoding} presents the encoding procedure for code $\mathrm{1\text{-}RLL\_Diff\_VT}_a(n,m; \bm{b})$. 
In general, both data symbols and coded symbols are taken from the same alphabet $\mathbb{Z}_q$. In the present encoding procedure, however, in order to satisfy the 1-RLL property, we restrict the data symbols to $\mathbb{Z}_{q-1}$. This guaranties that for each encoding step, there remains at least one available choice distinct from the previous symbol. Such a design simplifies the encoding procedure and ensures that the resulting codewords satisfy the 1-RLL property.
To facilitate understanding, we illustrate the procedure of Algorithm~\ref{alg:1RLLDVT_encoding} with a concrete example before proving its validity.


\begin{algorithm}
\caption{Encoding Algorithm for code $\mathrm{1\text{-}RLL\_Diff\_VT}_a(n,m; \bm{b})$}
\label{alg:1RLLDVT_encoding}
\begin{algorithmic}[1]
\Require A sequence of data symbols $(f_1, f_2, \ldots, f_{n - t - 4}) \in \mathbb{Z}_{q-1}^{\,n - t - 4}$, a suffix sequence $\bm{b} = (b_1, \ldots, b_m) \in \Sigma_q^m$ with $b_i \ne b_{i+1}$ for $1 \le i \le m-1$, parameters $n \ge 8$, $m \le 3$, $q \ge 3$, and a syndrome $a \in \mathbb{Z}_{q(n+m)}$.
\Ensure A codeword $\bm{x} \in \mathrm{1\text{-}RLL\_Diff\_VT}_a(n,m; \bm{b})$.

\State \textbf{Initialization:} $R_1 = \{ (q-1)^i \mid 0 \le i \le t \}$, $R_2 = \emptyset$, $\bm{y} = (0,0,\ldots,0) \in \Sigma_q^{n+m}$

\For{$i = n$ \textbf{downto} $1$}
  \If{$i \notin R_1$ \textbf{and} $|R_2| < 3$}
    \State $R_2 = R_2 \cup \{i\}$
  \EndIf
\EndFor

\State Sort $R_2$ in ascending order as $\{j_1, j_2, j_3\}$, and let $R = R_1 \cup R_2$, $K = \{1,2,\ldots,n\} \setminus R$, where $\setminus$ denotes the set difference operation

\For{$i = 1$ \textbf{to} $n-t-4$}
  \State Let $k_i$ be the $i$-th smallest element of $K$
  \State Set $y_{k_i} = f_i + 1$
\EndFor

\For{$i = n+1$ \textbf{to} $n+m-1$}
  \State $y_i = b_{i-n} - b_{i-n+1} \ppmod q$
\EndFor
\State $y_{n+m} = b_m$

\State Compute
\[
g_1 =a - \sum_{i=n+1}^{n+m} i y_i - \sum_{i \in K} i y_i - \sum_{i \in R} i  \ppmod {q(n+m)}
\]

\For{$i = 1$ \textbf{to} $3$}
  \State $e_i = \min(q-2, \lfloor g_i / j_i \rfloor)$
  \State $y_{j_i} = e_i + 1$
  \State $g_{i+1} = g_i - e_i \cdot j_i$
\EndFor

\State Express $g_4$ in base $(q-1)$:
\begin{equation}\label{eq:g4expansion}
g_4 = \sum_{i=0}^{t} h_i (q-1)^i, \quad 0 \le h_i \le q-2
\end{equation}

\For{$i = 0$ \textbf{to} $t$}
  \State $y_{(q-1)^i} = h_i + 1$
\EndFor

\State \Return $\bm{x} = \mathrm{Diff}^{-1}(\bm{y})$
\end{algorithmic}
\end{algorithm}

\begin{example}\label{example:onedim_encoding}
Consider the input parameters $n=7$, $q=7$, $m=2$, $\bm{b}=(0,2)$, $a=0$, $t=\lfloor \log_{6} 7\rfloor=1$, and the data symbols $(f_1=0, f_2=3)$. 
Note that the algorithm formally requires $n \ge 8$, but for the sake of illustration, we take $n=7$. 

In the initialization, we obtain $R_1 = \{1,6\}$. 
Steps~2--6 yield $R_2 = \{4,5,7\}$, and Step~7 gives $K = \{2,3\}$ and $R = \{1,4,5,6,7\}$. 
Steps~8--11 assign $y_2=f_1+1=1$, $y_3=f_2+1=4$ (these positions are used to store the data symbols), while Steps~12--15 assign $y_8=5$, $y_9=2$ (this ensures that the suffix of $\bm{x}$ matches the prescribed sequence $\bm{b}$). At this point, we obtain 
\[
\bm{y} = (0,1,4,0,0,0,0,5,2).
\] 

The subsequent steps are designed to guarantee that the final sequence belongs to the Differential VT code. In Step~16, we compute $g_1=31$. 
Iterating through Steps~17--21, we successively obtain 
$e_1=5$, $y_4=6$, $g_2=11$; 
$e_2=2$, $y_5=3$, $g_3=1$; 
$e_3=0$, $y_7=1$, $g_4=1$. 
Step~22 then gives $h_0=1$, $h_1=0$, and Steps~23--25 assign $y_1=2$, $y_6=1$. 
At this point, 
\[
\bm{y} = (2,1,4,6,3,1,1,5,2).
\]

Finally, Step~26 computes 
\[
\bm{x} = (4,2,1,4,5,2,1,0,2).
\] 
It can be verified that $\bm{x} \in \mathrm{1\text{-}RLL\_Diff\_VT}_0(7,2;(0,2))$.
\end{example}

We now turn to the formal analysis of Algorithm~\ref{alg:1RLLDVT_encoding}.
Lemma~\ref{lemma:1RLLDVT_welldesign} establishes the feasibility of its intermediate steps, 
and the correctness is proved in Theorem~\ref{theorem:1RLLDVT_encoding_correctness}.

\begin{lemma}
\label{lemma:1RLLDVT_welldesign}
For $n \ge 8$, $q \ge 3$, and $m \le 3$, Algorithm~\ref{alg:1RLLDVT_encoding} satisfies the following properties:
\begin{enumerate}
    \item The set $R_2$ obtained in Steps~2--6 has cardinality $|R_2|=3$;
    \item The index set $K$ in Step~7 satisfies $|K| = n-t-4$;
    \item The index $j_1$ obtained in Step~7 satisfies $j_1 \ge n-3$;
    \item The value $g_4$ in Step~22 satisfies 
    \[
    0 \le g_4 \le (q-1)^{\,t+1}-1,
    \]
    ensuring that the expansion in Eq.~\eqref{eq:g4expansion} is valid.
\end{enumerate}
\end{lemma}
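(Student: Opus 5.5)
The plan is to prove the four items in order, since each uses the previous ones. The only facts needed are that $R_1=\{(q-1)^i: 0\le i\le t\}$ consists of $t+1$ \emph{distinct} integers in $[1:n]$, and that $(q-1)^{t+1}\ge n+1$. Both follow from $q-1\ge 2$ and $t=\lfloor \log_{q-1} n\rfloor$.

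For items 1) and 2), the loop in Steps~2--6 scans $i=n,n-1,\ldots,1$ and collects the first three indices outside $R_1$. So $|R_2|=3$ as soon as $n-(t+1)\ge 3$. Since $t\le \log_2 n$, this holds whenever $\log_2 n\le n-4$, which is true for $n\ge 8$. Because $R_1\cap R_2=\emptyset$ and $R_1\cup R_2\subseteq[1:n]$, we get $|R|=t+4$, and hence $|K|=n-t-4$.

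For item 3), I will show that the window $\{n-3,n-2,n-1,n\}$ contains at most one element of $R_1$. If two powers $p<p'$ of $q-1$ both lay in it, then $p'\ge 2p$ and $p\ge n-3$ would force $2(n-3)\le n$, i.e.\ $n\le 6$, which is a contradiction. So the window has at least three non-$R_1$ indices. The downward scan picks exactly the three largest of them, so all of $R_2$ lies in $[n-3:n]$ and $j_1\ge n-3$. In particular $j_1,j_2,j_3$ are distinct elements of $[n-3:n]$, so $j_1+j_2+j_3\ge 3n-6$ and $j_3\le n$.

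Item 4) is the main step. Take $g_1$ as the least nonnegative residue, so $0\le g_1\le q(n+m)-1\le q(n+3)-1$. Since $e_i\le\lfloor g_i/j_i\rfloor$, induction gives $g_{i+1}\ge 0$, so $g_4\ge 0$. For the upper bound it suffices to show $g_4\le n$, because $(q-1)^{t+1}-1\ge n$. I split into two cases.

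\emph{Case 1: some step is not capped}, i.e.\ $e_i=\lfloor g_i/j_i\rfloor$ for some $i$. Then $g_{i+1}=g_i \bmod j_i<j_i\le n$, and the $g$'s are nonincreasing afterwards, so $g_4<n$.

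\emph{Case 2: all three steps are capped at $q-2$.} Then
\[
g_4=g_1-(q-2)(j_1+j_2+j_3)\le q(n+3)-1-(q-2)(3n-6)=q(9-2n)+6n-13 .
\]
For $q\ge 4$ and $n\ge 8$ this is at most $23-2n\le 7<n$. For $q=3$ it is at most $14$, whereas $(q-1)^{t+1}-1=2^{\lfloor\log_2 n\rfloor+1}-1\ge 15$ when $n\ge 8$.

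The delicate point is exactly this $q=3$ subcase, where the crude target $g_4\le n$ fails and one must use the sharper bound $2^{t+1}-1\ge 15$ directly. I would double-check that arithmetic carefully, including the use of $m\le 3$ and the minimal sum $3n-6$.
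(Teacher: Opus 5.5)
Your proof is correct and follows the paper's argument essentially step for step. It uses the same counting for items 1--2, the same ``two powers of $q-1$ in the window $[n-3:n]$'' contradiction for item 3, and the same capped/uncapped split for item 4, arriving at the bound $q(9-2n)+6n-13$. Your version is slightly tidier in two places: the ratio bound $p'\ge 2p$ avoids the paper's case split on $q$, and noting that for $q=3$ the bound equals exactly $14<15\le 2^{t+1}-1$ replaces the paper's enumeration over $8\le n\le 14$.
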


\begin{IEEEproof}
We prove each property in turn:

\begin{enumerate}
    \item Since $n \ge 8$ and $q \ge 3$, we have $n-(t+1) > 3$, which guarantees that the set $R_2$ obtained in Steps~2--6 of Algorithm~\ref{alg:1RLLDVT_encoding} satisfies $|R_2| = 3$.

    \item 

From the definitions of $R_1$ and $t$, we have $|R_1| = t+1$, and all elements of $R_1$ belong to the set $\{1,2,\ldots,n\}$. Moreover, it has been established that $|R_2| = 3$, and from Steps~2--6 of the algorithm, $R_1$ and $R_2$ are disjoint. Therefore, in Step~7, the set $K = \{1,2,\ldots,n\} \setminus (R_1 \cup R_2)$ satisfies

$$
|K| = n - |R_1| - |R_2| = n - t - 4.
$$

    \item We next prove by contradiction that $j_1 \ge n-3$ in Step~7. Suppose, to the contrary, that $j_1 < n-3$. Then, among the four integers $n-3, n-2, n-1, n$, at least two must be of the form $(q-1)^i$ for some $0 \le i \le t$. That is, there exist integers $i_1, i_2$ with $0 \le i_1 < i_2 \le t$ such that $(q-1)^{i_1} \ge n-3$ and $(q-1)^{i_2} \le n$, implying
\[
(q-1)^{i_2} - (q-1)^{i_1} \le 3.
\]

Since $i_2 \ge i_1 + 1$, we have $(q-1)^{i_1+1} - (q-1)^{i_1} = (q-1)^{i_1}(q-2) \le 3$. For $q \ge 6$, we have $q-2 \ge 4$, so this inequality cannot hold. For $3 \le q \le 5$, since $(q-1)^{i_1} \ge n-3 \ge 5$, it follows that $i_1 \ge 2$, which leads to $(q-1)^2(q-2) \le 3$. Substituting $q=3,4,5$ shows that this is impossible, yielding a contradiction. Hence, $j_1 \ge n-3$.

\item Next, we show that the value $g_4$ in Step~22 satisfies $0 \le g_4 \le (q-1)^{t+1}-1$. Step~16 defines $g_1$ such that $0 \le g_1 \le q(n+m)-1$. By Steps~18--20, we have $$g_2 = g_1 - \min(q-2, \lfloor g_1 / j_1 \rfloor) \cdot j_1.$$

Consider the two cases: 
\begin{itemize}
\item If $g_1 \ge (q-2)j_1$, then $g_2 = g_1 - (q-2)j_1$, thus $0 \le g_2 \le g_1 - (q-2)j_1$.
\item If $g_1 < (q-2) j_1$, then $g_2 = g_1 - \lfloor g_1/j_1 \rfloor j_1$, which implies $0 \le g_2 \le j_1-1$.
\end{itemize}

In general, we have
\[
0 \le g_2 \le \max(g_1 - (q-2) j_1, j_1-1),
\]
and since $j_1 - 1 \le n-1$, it follows that $$0 \le g_2 \le \max(g_1 - (q-2) j_1, n-1).$$ Similarly, we obtain
\begin{equation*}
\begin{aligned}
0 &\le g_3 \le \max\bigl(g_1 - (q-2) j_1 - (q-2) j_2, \, n-1\bigr),\\
0 &\le g_4 \le 
\max\Bigl(
    g_1 - (q-2) j_1 - (q-2) j_2 - (q-2) j_3, n-1
\Bigr).
\end{aligned}
\end{equation*}

Since $n \le (q-1)^{t+1}$, to ensure $0 \le g_4 \le (q-1)^{t+1}-1$, it suffices to show that
\[
g_1 - (q-2) j_1 - (q-2) j_2 - (q-2) j_3 \le (q-1)^{t+1}-1.
\]
Since $j_1 \ge n-3$, we have $j_2 \ge n-2$ and $j_3 \ge n-1$. Thus,
\begin{align*}
g_1& - (q-2) j_1 - (q-2) j_2 - (q-2) j_3\\
\le\,& q(n+m) - 1 - (q-2)(3n-6) \\
\le\,& q(9-2n) + 6n - 13,
\end{align*}
where we have used $m \le 3$. Hence, it remains to show
\[
q(9-2n) + 6n - 13 \le (q-1)^{t+1} - 1.
\]
When $q=3$ and $8 \le n \le 14$, the inequality can be verified directly by enumeration. In all other cases, since $(q-1)^{t+1}-1 \ge n-1$, it suffices to show
\[
q(9-2n) + 6n - 13 \le n-1,
\]
which is equivalent to $n \ge \frac{9q-12}{2q-5}$. It is easy to verify that this holds for $q=3, n \ge 15$ and for $q \ge 4, n \ge 8$. Therefore, $g_4$ satisfies $0 \le g_4 \le (q-1)^{t+1}-1$, ensuring that the expansion in Eq.~\eqref{eq:g4expansion} is valid.

\end{enumerate}

Combining these properties, we see that every operation in Algorithm~\ref{alg:1RLLDVT_encoding} can be performed as intended, ensuring the algorithm is well-defined.
\end{IEEEproof}

Based on the properties established in Lemma~\ref{lemma:1RLLDVT_welldesign}, we can now prove the overall correctness of the algorithm.

\begin{theorem}
\label{theorem:1RLLDVT_encoding_correctness}
For $n \ge 8$, $q \ge 3$, and $m \le 3$, the output $\bm{x}$ of Algorithm~\ref{alg:1RLLDVT_encoding} satisfies
\[
\bm{x} \in \mathrm{1\text{-}RLL\_Diff\_VT}_a(n,m; \bm{b}).
\]
\end{theorem}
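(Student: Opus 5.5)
We verify the three defining conditions of $\mathrm{1\text{-}RLL\_Diff\_VT}_a(n,m;\bm{b})$ directly from the structure of $\bm{y}=\mathrm{Diff}(\bm{x})$, using that $\mathrm{Diff}$ is a bijection with inverse $\mathrm{Diff}^{-1}$. By Lemma~\ref{lemma:1RLLDVT_welldesign}, every step of Algorithm~\ref{alg:1RLLDVT_encoding} is well defined. In particular, $|K|=n-t-4$, the expansion in Eq.~\eqref{eq:g4expansion} exists, and $R_1$, $R_2$, $K$ are pairwise disjoint with union $\{1,\ldots,n\}$. Hence every coordinate $y_1,\ldots,y_n$ is assigned exactly once, and $y_{n+1},\ldots,y_{n+m}$ are assigned in Steps~12--15.

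\textbf{1-RLL and suffix.} Since $x_i-x_{i+1}\equiv y_i \ppmod q$ for $i\le n+m-1$, the condition $x_i\neq x_{i+1}$ is equivalent to $y_i\neq 0$. For $i\in K$ we have $y_i=f_i+1\in[1:q-1]$. For $i\in R_2$ we have $y_{j_i}=e_i+1$ with $0\le e_i\le q-2$; here $e_i\ge 0$ needs $g_i\ge 0$, which follows inductively from $g_1\ge0$ as in the proof of Lemma~\ref{lemma:1RLLDVT_welldesign}. For $i\in R_1$ we have $y_{(q-1)^i}=h_i+1\in[1:q-1]$. For $n+1\le i\le n+m-1$, $y_i=b_{i-n}-b_{i-n+1}\neq 0$ because consecutive $b$'s differ. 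Finally we need $x_n\neq x_{n+1}$, i.e.\ $y_n\ne0$, which holds since $n$ lies in $K\cup R$. For the suffix, note $x_{n+m}=y_{n+m}=b_m$, and going backwards $x_i=x_{i+1}+y_i$ gives $x_i=b_{i-n}$ for $i=n+1,\ldots,n+m$ by telescoping.

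\textbf{Syndrome.} Write $\mathrm{Syn}(\bm{y})=\sum_{i>n} iy_i+\sum_{i\in K}iy_i+\sum_{i\in R_2}iy_i+\sum_{i\in R_1}iy_i$. Using $y_j=(\text{value})+1$ on $R$, we get $\sum_{i\in R_2}iy_i+\sum_{i\in R_1}iy_i=\sum_{i\in R}i+\sum_{l=1}^3 e_lj_l+\sum_{i=0}^t h_i(q-1)^i$. By the recursion $g_{l+1}=g_l-e_lj_l$, we have $\sum_l e_lj_l=g_1-g_4$, and by Eq.~\eqref{eq:g4expansion} the last sum equals $g_4$. So these terms total $\sum_{i\in R} i+g_1$. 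Substituting the definition of $g_1$ from Step~16 gives $\mathrm{Syn}(\bm{y})\equiv a\ppmod{q(n+m)}$.

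The main point needing care is this bookkeeping: $g_1$ is taken as its least nonnegative residue, so the identities hold as integers except for the single reduction modulo $q(n+m)$ at Step~16. The remaining delicate facts, that each $e_i$ and $h_i$ lies in $[0:q-2]$, are exactly what Lemma~\ref{lemma:1RLLDVT_welldesign}(4) and the $\min(q-2,\cdot)$ in Step~18 provide, so we expect no further obstacle.
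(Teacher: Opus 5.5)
Your proposal is correct and follows essentially the same route as the paper: the suffix is recovered by backward induction from $x_{n+m}=y_{n+m}=b_m$, the 1-RLL property follows from $y_i\in[1:q-1]$ for $i\le n$ together with $b_i\ne b_{i+1}$, and the syndrome follows from the identity $\sum_{l}e_lj_l=g_1-g_4=g_1-\sum_i h_i(q-1)^i$. You also make explicit two facts the paper's proof leaves implicit: that each $g_i\ge 0$ (so $e_i\ge 0$), and that $R_1$, $R_2$, $K$ partition $\{1,\ldots,n\}$.
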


\begin{proof}
We verify that the output $\bm{x}$ satisfies the following three properties:

\begin{enumerate}
    \item \textbf{Suffix constraint.}  
    We first verify that 
    \[
        (x_{n+1}, \ldots, x_{n+m}) = (b_1, \ldots, b_m).
    \]
    We proceed by backward induction.  
    From Step~15 of the algorithm, $y_{n+m} = b_m$, and by Step~26, 
    $\bm{x} = \mathrm{Diff}^{-1}(\bm{y})$, which gives $x_{n+m} = b_m$.  

    Suppose that $x_{n+g} = b_g$ holds for some $2 \le g \le m$.  
    From Step~13, we have 
    \[
        y_{n+g-1} = b_{g-1} - b_g \ppmod q.
    \]
    On the other hand, since $\mathrm{Diff}(\bm{x}) = \bm{y}$, it follows that
    \[
        y_{n+g-1} = x_{n+g-1} - x_{n+g} \ppmod q.
    \]
    Therefore,
    \[
        b_{g-1} - b_g \equiv x_{n+g-1} - x_{n+g} \ppmod{q}.
    \]
    By the induction hypothesis $x_{n+g} = b_g$, which implies
    \[
        x_{n+g-1} \equiv b_{g-1} \ppmod{q}.
    \]
    Since both $x_{n+g-1}, b_{g-1} \in \Sigma_q$, we obtain $x_{n+g-1} = b_{g-1}$.  
    By induction, $(x_{n+1}, \ldots, x_{n+m}) = (b_1, \ldots, b_m)$.

    \item \textbf{1-RLL property.}  
    Next, we show that $x_i \ne x_{i+1}$ for all $1 \le i \le n+m-1$.  
    Since $b_i \ne b_{i+1}$ for $i=1,2,\ldots,m-1$ and 
    $(x_{n+1}, \ldots, x_{n+m}) = (b_1, \ldots, b_m)$, we have
    $x_i \ne x_{i+1}$ for $i = n+1,\ldots,n+m-1$.  

    From the assignments in Steps~10, 19, and 24 of the algorithm, we have
    $1 \le y_i \le q-1$ for all $i=1,2,\ldots,n$.  
    Since $y_i = x_i - x_{i+1} \ppmod q$, it follows that 
    $x_i \ne x_{i+1}$ for $i = 1,2,\ldots,n$.  
    Therefore, $x_i \ne x_{i+1}$ holds for all $1 \le i \le n+m-1$.

    \item \textbf{Syndrome constraint.}  
    Finally, we verify that
    \[
        \mathrm{Syn}(\mathrm{Diff}(\bm{x})) \equiv a \ppmod{q(n+m)}.
    \]
    Equivalently, we need to show
    \[
        \sum_{i=1}^{n+m} i y_i \equiv a \ppmod{q(n+m)}.
    \]
    From the definition of $g_1$ in Step~16 and the fact that $R=R_1 \cup R_2$, we obtain
    \begin{align*}
    &a - \sum_{i=1}^{n+m} i y_i \\[0.5ex]
    \equiv\,& a - \sum_{i \in R} i (y_i - 1) - \sum_{i \in R} i 
              - \sum_{i \in K} i y_i - \sum_{i=n+1}^{n+m} i y_i \\[0.5ex]
    \equiv\,& g_1 - \sum_{i \in R_1} i (y_i - 1) - \sum_{i \in R_2} i (y_i - 1) \\[0.5ex]
    \equiv\,& g_1 - \sum_{i=0}^t (q-1)^i \bigl(y_{(q-1)^i} - 1 \bigr) 
              - \sum_{i=1}^3 j_i (y_{j_i} - 1) \\
    &\ppmod{q(n+m)}.
\end{align*}

    From Steps~24 and 22, we have
    \[
        \sum_{i=0}^t (q-1)^i \bigl(y_{(q-1)^i} - 1 \bigr)
        = \sum_{i=0}^t h_i (q-1)^i = g_4.
    \]
    Moreover, from Steps~17--21,
    \[
        g_1 - \sum_{i=1}^3 j_i (y_{j_i} - 1) = g_4.
    \]
    Combining the above, we obtain
    \[
        a - \sum_{i=1}^{n+m} i y_i \equiv 0 \ppmod{q(n+m)}.
    \]
    Hence, $\mathrm{Syn}(\mathrm{Diff}(\bm{x})) \equiv a \ppmod{q(n+m)}$.
\end{enumerate}

Therefore, the output $\bm{x}$ of the algorithm satisfies 
$\bm{x} \in \mathrm{1\text{-}RLL\_Diff\_VT}_a(n,m; \bm{b})$, which completes the proof.

\end{proof}

\subsection{Decoding and Data Recovery Algorithms of Code $\mathrm{1\text{-}RLL\_Diff\_VT}_a(n,m; \bm{b})$}
In this subsection, we present the decoding and data recovery algorithms of code $\mathrm{1\text{-}RLL\_Diff\_VT}_a(n,m; \bm{b})$. 

Since this code is itself a Differential VT code with the additional property that adjacent symbols are distinct, the same decoding procedure described in Proposition~\ref{prop:unique_deletion_position} can be applied to uniquely reconstruct the original codeword and identify the location of a single deletion. 
After the original codeword $\bm{x}$ is recovered, the data recovery algorithm (which is presented in Algorithm~\ref{alg:1RLLDVT_info_recovery}) is applied to obtain the original data symbols, whose correctness is straightforward.

\begin{algorithm}
\caption{Data Recovery Algorithm for Code $\mathrm{1\text{-}RLL\_Diff\_VT}_a(n,m; \bm{b})$}
\label{alg:1RLLDVT_info_recovery}
\begin{algorithmic}[1]
\Require The recovered codeword $\bm{x} \in \mathrm{1\text{-}RLL\_Diff\_VT}_a(n,m; \bm{b})$
\Ensure Original sequence of data symbols $(f_1, f_2, \ldots, f_{n-t-4}) \in \mathbb{Z}_{q-1}^{n-t-4}$
\State Compute $\bm{y} = \mathrm{Diff}(\bm{x})$
\State Determine the index set $K$ according to Steps~1–7 of Algorithm~\ref{alg:1RLLDVT_encoding}
\State Extract the subsequence $(y_{k_1}, y_{k_2}, \ldots, y_{k_{n-t-4}})$ from $\bm{y}$ with indices in $K$
\State Subtract one from each element, i.e., $f_i = y_{k_i}-1$, to obtain $(f_1, f_2, \ldots, f_{n-t-4})$
\State \Return $(f_1, f_2, \ldots, f_{n-t-4})$
\end{algorithmic}
\end{algorithm}

It is clear that the complexities of the encoding algorithm (Algorithm~\ref{alg:1RLLDVT_encoding}), the decoding algorithm (refer to \cite{2024Differential_VT_codes}),  and the data recovery algorithms (Algorithm~\ref{alg:1RLLDVT_info_recovery}) are all $\mathcal{O}(n)$.

\section{Encoding and Data Recovery Algorithms}

In this section, building upon code $\mathrm{1\text{-}RLL\_Diff\_VT}_a(n,m; \bm{b})$ introduced in Section~\ref{sec:1RLLDVT}, 
we present the encoding and data recovery algorithms of the $(1,1)$-criss-cross deletion correcting code \textbf{Construction~\ref{constrcution:ourcode}} 
(note that the decoding algorithm has already been presented in Section~\ref{subsec:2Ddecoding}).

\subsection{Encoding Algorithm}


Let 
\begin{align*}
k_1 &= n-6-\lfloor \log_{q-1}(n-2) \rfloor,\\ 
k_2 &= n-7-\lfloor \log_{q-1}(n-3) \rfloor,\\ 
k_3 &= \lfloor (k_1+k_2)\log_q(q-1) \rfloor.
\end{align*}

Given the parameters $n \geq 11$ and $q \geq 3$, we next present an explicit encoding algorithm for \textbf{Construction~\ref{constrcution:ourcode}}, which achieves an encoder redundancy of 
\[
r_{\mathrm{ENC}} = 4n-2-k_3.
\]
Specifically, given a sequence of data symbols 
\[
(f_1,f_2,\ldots,f_{n^2-4n+2+k_3}) \in \mathbb{Z}_q^{\,n^2-4n+2+k_3},
\] 
our goal is to encode it into a codeword $\bm{X} \in \Sigma_q^{n \times n}$ such that $\bm{X} \in \mathcal{C}$, where $\mathcal{C}$ denotes the $(1,1)$-criss-cross deletion correcting code defined in \textbf{Construction~\ref{constrcution:ourcode}}.    

The key idea of our encoding algorithm is to map the first $k_3$ data symbols over $\mathbb{Z}_q$ into $k_1+k_2$ symbols over $\mathbb{Z}_{q-1}$. Among these, $k_1$ symbols are used in the encoding of $\bm{U}$ and the remaining $k_2$ symbols are used in the encoding of $\bm{V}$ in \textbf{Construction~\ref{constrcution:ourcode}}. The subsequent $n^2-4n+2$ data symbols are directly placed into the central part of the array as shown in Fig.~\ref{fig:1}. Finally, the vectors $\bm{p_c}$ and $\bm{p_r}$ of \textbf{Construction~\ref{constrcution:ourcode}} are computed to complete the encoding. The detailed encoding procedure is summarized in Algorithm~\ref{alg:ourencoding}. Note that in Step~3, the expansion contains only $k_1+k_2$ terms because $0 \le h \le q^{k_3}-1$ and $k_3 = \lfloor (k_1+k_2)\log_q(q-1) \rfloor$. The correctness of Algorithm~\ref{alg:ourencoding} is straightforward and the proof is omitted.

\begin{algorithm}[!t]
\caption{Encoding Algorithm for \textbf{Construction~\ref{constrcution:ourcode}}}
\label{alg:ourencoding}
\begin{algorithmic}[1]
\Require Parameters $n \geq 11$, $q \geq 3$; a sequence of data symbols $(f_1,f_2,\ldots,f_{n^2-4n+2+k_3}) \in \mathbb{Z}_q^{n^2-4n+2+k_3}$
\Ensure Codeword $\bm{X} \in \mathcal{C}$, where $\mathcal{C}$ is the $(1,1)$-criss-cross deletion correcting code from \textbf{Construction~\ref{constrcution:ourcode}}
\State Initialize $\bm{X} \in \Sigma_q^{n\times n}$ as the all-zero array
\State Compute $h = \sum\limits_{i=1}^{k_3} f_i q^{i-1}$
\State Expand $h$ in base $(q-1)$: $h = \sum\limits_{i=1}^{k_1+k_2} a_i (q-1)^{i-1}$, where $0 \le a_i \le q-2$
\State Set $\bm{c} = (a_1,a_2,\ldots,a_{k_1})$; encode $\bm{c}$ into $\bm{U} \in \mathrm{1\text{-}RLL\_Diff\_VT}_0(n-2,2;(0,2))$ using Algorithm~\ref{alg:1RLLDVT_encoding}, and place $\bm{U}$ in the first row of $\bm{X}$
\State Set $\bm{d} = (a_{k_1+1},\ldots,a_{k_1+k_2})$; encode $\bm{d}$ into $\bm{V} \in \mathrm{1\text{-}RLL\_Diff\_VT}_0(n-3,3;(0,1,2))$ using Algorithm~\ref{alg:1RLLDVT_encoding}, and place the reversed $\bm{V}$ in the last column of $\bm{X}$
\State Assign $\bm{X}_{2,n-1} = 1$, $\bm{X}_{3,n-1} = 2$
\For{$j = 2$ \textbf{to} $n-2$}
    \State $\bm{X}_{2,j} = f_{k_3+j-1}$
    \State $\bm{X}_{3,j} = f_{k_3+j+n-4}$
\EndFor
\For{$i = 4$ \textbf{to} $n-1$}
    \For{$j = 2$ \textbf{to} $n-1$}
        \State $\bm{X}_{i,j} = f_{k_3 + (n-2)i + j - 2n + 1}$
    \EndFor
\EndFor
\For{$i = 2$ \textbf{to} $n$}
    \State $\bm{X}_{i,1} = -\sum\limits_{j=2}^n \bm{X}_{i,j} \ppmod q$
\EndFor
\For{$j = 2$ \textbf{to} $n-1$}
    \State $\bm{X}_{n,j} = -\sum\limits_{i=1}^{n-1} \bm{X}_{i,j}\ppmod q$
\EndFor
\State \Return $\bm{X}$
\end{algorithmic}
\end{algorithm}

To illustrate the encoding procedure, we now detail how the codeword $\bm{X}$ in Example~\ref{example:encoding} is obtained using Algorithm~\ref{alg:ourencoding}.

Consider the case $n=9$, $q=7$, $k_1=2$, $k_2=1$, $k_3=2$, and the sequence of data symbols
\begin{align*}
(f_1&, \ldots, f_{49}) \\
= (&4,2,0,2,4,4,0,4,1,2,6,0,3,2,6,1,0,3,2,4,\\
& 6,4,3,5,6,1,6,1,1,3,2,3,2,2,6,0,0,5,0,5,\\
&5,3,0,6,2,3,3,0,5).
\end{align*}
Note that although the algorithm requires $n \geq 11$, this condition is relaxed here for illustrative purposes. 

In Step~2, we compute $h=4+2\times 7=18$. Step~3 expands $18=3\times 6$, which gives $a_1=0$, $a_2=3$, and $a_3=0$. In Step~4, we obtain $\bm{c}=(0,3)$. The encoding of $\bm{c}$ by Algorithm~\ref{alg:1RLLDVT_encoding}, as demonstrated in Example~\ref{example:onedim_encoding}, yields $\bm{U}=(4,2,1,4,5,2,1,0,2)$. In Step~5, $\bm{d}=(0)$, and the corresponding encoding produces $\bm{V}=(0,6,5,6,0,1,0,1,2)$. 
Placing $\bm{U}$, $\bm{V}$, and two fixed symbols from Step~6 into $\bm{X}$ gives
\[
\bm{X} =
\begin{bmatrix}
\color{red}{4} & \color{red}{2} & \color{red}{1} & \color{red}{4} & \color{red}{5} & \color{red}{2} & \color{red}{1} & \color{red}{0} & \color{red}{2} \\
0 & 0 & 0 & 0 & 0 & 0 & 0 & \color{red}{1} & \color{red}{1} \\
0 & 0 & 0 & 0 & 0 & 0 & 0 & \color{red}{2} & \color{red}{0} \\
0 & 0 & 0 & 0 & 0 & 0 & 0 & 0 & \color{red}{1} \\
0 & 0 & 0 & 0 & 0 & 0 & 0 & 0 & \color{red}{0} \\
0 & 0 & 0 & 0 & 0 & 0 & 0 & 0 & \color{red}{6} \\
0 & 0 & 0 & 0 & 0 & 0 & 0 & 0 & \color{red}{5} \\
0 & 0 & 0 & 0 & 0 & 0 & 0 & 0 & \color{red}{6} \\
0 & 0 & 0 & 0 & 0 & 0 & 0 & 0 & \color{red}{0}
\end{bmatrix}.
\]

Steps~7--15 successively place the remaining $47$ information symbols 
$(f_3,f_4,\ldots,f_{49})$ into the interior of $\bm{X}$. 
The entries that are filled in this step are highlighted in red in the following array:
\[
\begin{bmatrix}
4 & 2 & 1 & 4 & 5 & 2 & 1 & 0 & 2 \\
0 & \color{red}{0} & \color{red}{2} & \color{red}{4} & \color{red}{4} & \color{red}{0} & \color{red}{4} & 1 & 1 \\
0 & \color{red}{1} & \color{red}{2} & \color{red}{6} & \color{red}{0} & \color{red}{3} & \color{red}{2} & 2 & 0 \\
0 & \color{red}{6} & \color{red}{1} & \color{red}{0} & \color{red}{3} & \color{red}{2} & \color{red}{4} & \color{red}{6} & 1 \\
0 & \color{red}{4} & \color{red}{3} & \color{red}{5} & \color{red}{6} & \color{red}{1} & \color{red}{6} & \color{red}{1} & 0 \\
0 & \color{red}{1} & \color{red}{3} & \color{red}{2} & \color{red}{3} & \color{red}{2} & \color{red}{2} & \color{red}{6} & 6 \\
0 & \color{red}{0} & \color{red}{0} & \color{red}{5} & \color{red}{0} & \color{red}{5} & \color{red}{5} & \color{red}{3} & 5 \\
0 & \color{red}{0} & \color{red}{6} & \color{red}{2} & \color{red}{3} & \color{red}{3} & \color{red}{0} & \color{red}{5} & 6 \\
0 & 0 & 0 & 0 & 0 & 0 & 0 & 0 & 0
\end{bmatrix}.
\]

Finally, Steps~16--21 compute the vectors $\bm{p_c}$ and $\bm{p_r}$ in \textbf{Construction~\ref{constrcution:ourcode}}, updating $\bm{X}$ to
\[\bm{X}=
\begin{bmatrix}
4 & 2 & 1 & 4 & 5 & 2 & 1 & 0 & 2 \\
\color{red}{5} & 0 & 2 & 4 & 4 & 0 & 4 & 1 & 1 \\
\color{red}{5} & 1 & 2 & 6 & 0 & 3 & 2 & 2 & 0 \\
\color{red}{5} & 6 & 1 & 0 & 3 & 2 & 4 & 6 & 1 \\
\color{red}{2} & 4 & 3 & 5 & 6 & 1 & 6 & 1 & 0 \\
\color{red}{3} & 1 & 3 & 2 & 3 & 2 & 2 & 6 & 6 \\
\color{red}{5} & 0 & 0 & 5 & 0 & 5 & 5 & 3 & 5 \\
\color{red}{3} & 0 & 6 & 2 & 3 & 3 & 0 & 5 & 6 \\
\color{red}{3} & \color{red}{0} & \color{red}{3} & \color{red}{0} & \color{red}{4} & \color{red}{3} & \color{red}{4} & \color{red}{4} & 0
\end{bmatrix}.
\]

It can be observed that the resulting array $\bm{X}$ coincides with Example~\ref{example:encoding}, and thus $\bm{X}\in\mathcal{C}$ as required.


\subsection{Data Recovery Algorithm}

When a $(1,1)$-criss-cross deletion occurs, the decoding algorithm in Section~\ref{subsec:2Ddecoding} recovers the original codeword $\bm{X}$. In this subsection, we further demonstrate that the original sequence of data symbols $(f_1,f_2,\ldots,f_{n^2-4n+2+k_3})$ can also be uniquely retrieved. 

Algorithm~\ref{alg:information_retrieval} formally presents the data recovery algorithm of \textbf{Construction~\ref{constrcution:ourcode}}, which essentially performs the inverse operation of Algorithm~\ref{alg:ourencoding}, and its correctness can be verified directly.

\begin{algorithm}
\caption{Data Recovery Algorithm for \textbf{Construction~\ref{constrcution:ourcode}}}
\label{alg:information_retrieval}
\begin{algorithmic}[1]
\Require Decoded codeword $\bm{X} \in \mathcal{C}$
\Ensure Original sequence of data symbols $(f_1,f_2,\ldots,f_{n^2-4n+2+k_3}) \in \mathbb{Z}_{q}^{n^2-4n+2+k_3}$
\State Extract the first row of $\bm{X}$ as $\bm{U}$, and apply Algorithm~\ref{alg:1RLLDVT_info_recovery} to recover the sequence of data symbols $(a_1,a_2,\ldots,a_{k_1})$
\State Extract the last column of $\bm{X}$, reverse its order, denote it as $\bm{V}$, and apply Algorithm~\ref{alg:1RLLDVT_info_recovery} to recover the sequence of data symbols $(a_{k_1+1},a_{k_1+2},\ldots,a_{k_1+k_2})$
\State Compute $h = \sum\limits_{i=1}^{k_1+k_2} a_i (q-1)^{i-1}$
\State Express $h$ in base-$q$, i.e., $h = \sum\limits_{i=1}^{k_3} f_i q^{i-1}$, where $0 \le f_i \le q-1$
\For{$j = 2$ \textbf{to} $n-2$}
    \State $f_{k_3+j-1} = \bm{X}_{2,j}$
    \State $f_{k_3+j+n-4} = \bm{X}_{3,j}$
\EndFor
\For{$i = 4$ \textbf{to} $n-1$}
    \For{$j = 2$ \textbf{to} $n-1$}
        \State $f_{k_3 + (n-2)i + j - 2n + 1} = \bm{X}_{i,j}$
    \EndFor
\EndFor
\State \Return $(f_1,f_2,\ldots,f_{n^2-4n+2+k_3}) \in \mathbb{Z}_{q}^{n^2-4n+2+k_3}$
\end{algorithmic}
\end{algorithm}

It is easy to verify that both the encoding algorithm and the data recovery algorithm have computational complexity $\mathcal{O}(n^2)$.



\section{Redundancy Analysis}
In this section, we analyze the encoder redundancy of \textbf{\textbf{Construction~\ref{constrcution:ourcode}}} and demonstrate that, under suitable parameter regimes, it  matches the lower bound in Theorem~\ref{theorem:theoretic_bound}, differing only by a constant \(\mathcal{O}(1)\).

When $n \ge 11$ and $q \ge 3$, the encoding algorithm of \textbf{Construction~\ref{constrcution:ourcode}} (Algorithm~\ref{alg:ourencoding}) maps $n^2 - 4n + 2 + k_3$ data symbols over $\mathbb{Z}_q$ into $n^2$ code symbols over $\mathbb{Z}_q$. Therefore, the encoder redundancy of \textbf{Construction~\ref{constrcution:ourcode}} is given by
\begin{equation}\label{eq:redundancy}
    r_{\mathrm{ENC}} = 4n - 2 - k_3.
\end{equation}

The following theorem provides an explicit upper bound on the encoder redundancy 
of \textbf{Construction~\ref{constrcution:ourcode}}.

\begin{theorem}
For $n \ge 11$ and $q \ge 3$, the encoder redundancy of \textbf{Construction~\ref{constrcution:ourcode}} satisfies
    \[
        r_{\mathrm{ENC}} \le 2n + 2\log_q n + (2n-13)\log_q\frac{q}{q-1} + 12.
    \]
    Moreover, if $q = \Omega(n)$, i.e., there exists a constant $c>0$ such that $q \ge cn$, then $$r_{\mathrm{ENC}} = 2n + 2\log_q n + \mathcal{O}(1).$$ 

\end{theorem}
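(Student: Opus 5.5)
Starting from Eq.~\eqref{eq:redundancy}, $r_{\mathrm{ENC}} = 4n-2-k_3$, the plan is to lower-bound $k_3$. The floor costs at most $1$, so
\[
k_3 \ge (k_1+k_2)\log_q(q-1) - 1 .
\]
Next we lower-bound $k_1+k_2$. Since $n-2\le n$ and $n-3\le n$, we have $\lfloor \log_{q-1}(n-2)\rfloor \le \log_{q-1} n$, and likewise for $n-3$. Hence
\[
k_1+k_2 \ge 2n-13-2\log_{q-1} n .
\]
Both $k_1,k_2\ge 1$ for $n\ge 11$, so $k_1+k_2$ is positive and multiplying by $\log_q(q-1)>0$ preserves the inequality. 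Using $\log_q(q-1)\log_{q-1}n=\log_q n$, we obtain
\[
k_3 \ge (2n-13)\log_q(q-1) - 2\log_q n - 1 .
\]

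Now write $\log_q(q-1) = 1-\log_q\frac{q}{q-1}$. Then $(2n-13)\log_q(q-1) = 2n-13-(2n-13)\log_q\frac{q}{q-1}$. Substituting,
\[
r_{\mathrm{ENC}} \le 4n-2-(2n-13)+(2n-13)\log_q\tfrac{q}{q-1}+2\log_q n+1 = 2n+2\log_q n+(2n-13)\log_q\tfrac{q}{q-1}+12 ,
\]
which is exactly the claimed bound. The only care needed is with signs when dropping floors: each dropped floor must move the bound on $r_{\mathrm{ENC}}$ upward, which it does here.

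For the asymptotic statement, the main point is to show that $(2n-13)\log_q\frac{q}{q-1}=\mathcal{O}(1)$ when $q\ge cn$. Write
\[
\log_q\frac{q}{q-1} = \frac{-\ln(1-1/q)}{\ln q} \le \frac{1/(q-1)}{\ln q} \le \frac{2}{q\ln 3}
\]
for $q\ge 3$, using $-\ln(1-x)\le x/(1-x)$. The term is then at most $2(2n)/(cn\ln 3)=\mathcal{O}(1)$, which gives the upper bound $2n+2\log_q n+\mathcal{O}(1)$.

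The matching lower bound comes from Theorem~\ref{theorem:theoretic_bound}: $r_{\mathrm{ENC}}\ge r_{\mathcal{C}}\ge 2n+2\log_q n-3$. Combining the two bounds yields $r_{\mathrm{ENC}} = 2n+2\log_q n+\mathcal{O}(1)$. The code redundancy is sandwiched between the same two bounds, so it satisfies the same estimate.

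I expect no real obstacle. The step needing the most care is the conversion of the $\log_{q-1}$ terms to $\log_q n$ through the $\log_q(q-1)$ factor, together with the constant bookkeeping that gives $12$.
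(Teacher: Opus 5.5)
Your proposal is correct and follows the paper's proof essentially step for step: drop the floor in $k_3$, bound $\lfloor\log_{q-1}(n-2)\rfloor+\lfloor\log_{q-1}(n-3)\rfloor$ by $2\log_{q-1}n$, convert via $\log_q(q-1)\log_{q-1}n=\log_q n$, and control $(2n-13)\log_q\frac{q}{q-1}$ using $\ln\bigl(1+\frac{1}{q-1}\bigr)\le\frac{1}{q-1}$. Two small points are tidier than the paper: using $q-1\ge q/2$ and $\ln q\ge\ln 3$ gives a uniform constant $4/(c\ln 3)$ for all $n\ge 11$ instead of the paper's ``for sufficiently large $n$'', and your explicit appeal to Theorem~\ref{theorem:theoretic_bound} for the lower side makes the two-sided $\mathcal{O}(1)$ claim complete, whereas the paper leaves that step implicit.
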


\begin{proof}
    By the definitions of $k_1$, $k_2$, and $k_3$, we have
    \begin{align*}
        k_3
        &= \lfloor (k_1 + k_2) \log_q (q-1) \rfloor \\
        &\ge \Bigl( 2n - 13 - \log_{q-1}(n-2) - \log_{q-1}(n-3) \Bigr) \log_q (q-1) - 1 \\
        &\ge (2n-13)\log_q(q-1) - 2 \log_q n - 1 \\
        &= 2n - 14 + (2n-13) \log_q \frac{q-1}{q} - 2 \log_q n.
    \end{align*}
    Therefore, combining with Eq.~\eqref{eq:redundancy}, the encoder redundancy satisfies
    \[
        r_{\mathrm{ENC}} \le 2n + 2\log_q n + (2n-13) \log_q \frac{q}{q-1} + 12.
    \]

   When $q = \Omega(n)$, i.e., there exists a constant $c>0$ such that $q \ge cn$.  
    To bound the third term, observe that
    \[
        (2n-13)\log_q \frac{q}{q-1}
        = \frac{(2n-13)\,\ln\!\left(1+\tfrac{1}{q-1}\right)}{\ln q},
    \]
    where $\ln$ denotes the natural logarithm (base $e$).
    Using the inequality $\ln(1+x) \le x$ for $x>-1$, we obtain
    \[
        (2n-13)\log_q \frac{q}{q-1}
        \le \frac{2n-13}{(q-1)\ln q}.
    \]

    Since $q \ge cn$ and $cn > 1$ for sufficiently large $n$, the above is at most
    \[
        \frac{2n-13}{(cn-1)\ln(cn)},
    \]
     which for sufficiently large $n$ is bounded by
        \[
            \frac{4}{c} = \mathcal{O}(1).
        \] Substituting back, we obtain
    \[
        r_{\mathrm{ENC}} = 2n + 2\log_q n + \mathcal{O}(1).
    \]


\end{proof}

Since the code redundancy cannot exceed the encoder redundancy, the fact that the encoder redundancy attains the lower bound within an $\mathcal{O}(1)$ gap implies that the code redundancy is also optimal within $\mathcal{O}(1)$ gap.

\section{Conclusion}

In this paper, 
we propose a new construction of $q$-ary $(1,1)$-criss-cross deletion correcting code and provide 
explicit encoding, decoding, and data recovery algorithms. 
The proposed algorithms achieve a computational complexity of $\mathcal{O}(n^2)$. 
Moreover, we prove that under certain parameter settings, both the code redundancy and the 
encoder redundancy of the constructed criss-cross codes attain the lower bound within an $\mathcal{O}(1)$ gap. 
To the best of our knowledge, this is the first code construction that simultaneously attains the optimal code redundancy and encoder redundancy within an $\mathcal{O}(1)$ gap, while providing a explicit encoding algorithm.

\appendices
\section{Remarks on the Construction of~\cite{2025arxiv_crisscross3}}\label{appendix:2025problem}

In Section IV of~\cite{2025arxiv_crisscross3}, Construction 1 imposes Row/Column Sum Constraints such that the sum of the entries in the $k$-th row equals $b_k$, and the sum of the entries in the $k$-th column equals $a_k$, for $k=1,2,\ldots,n$. The first step of the decoding algorithm in~\cite{2025arxiv_crisscross3} attempts to recover the deleted symbols using these row and column sums. While this approach is natural, certain cases may lead to ambiguities. 

To illustrate a potential difficulty, consider the case $n=2$ and $q=4$, and the arrays
\[
A = \begin{bmatrix}
    1 & 1 \\
    1 & 2
\end{bmatrix}, \quad 
B = \begin{bmatrix}
    2 & 0 \\
    0 & 3
\end{bmatrix}.
\]
It is easy to verify that both arrays $A$ and $B$ have identical row sums and column sums. Now, if we delete the first row and first column of $A$ and the second row and second column of $B$, the resulting array in both cases is 
\[
C = \begin{bmatrix} 2 \end{bmatrix}.
\]
Therefore, it is impossible to determine the deleted symbols from $C$, since the indices of the deleted rows and columns in the construction of~\cite{2025arxiv_crisscross3} are not known a priori. Consequently, the decoding algorithm cannot be directly applicable in such cases.

One possible way to avoid this ambiguity would be to require all $a_k$’s and $b_k$’s to be identical. However, this modification is not compatible with the scheme of~\cite{2025arxiv_crisscross3}, where $a_k$ and $b_k$, $k=1,2,\ldots,n$, are fixed values chosen to facilitate subsequent arguments.




\section{Remarks on the Construction of~\cite{2021TIT_crisscross1}}\label{appendix:2021TIT_problem}

In Section VI-B of~\cite{2021TIT_crisscross1}, the deletion correcting decoder presented in the proof of Theorem 13 is intended to recover deleted rows and columns for the proposed code construction. In Case 1(b), it is stated that after correcting the column errors, the index of the deleted row can be determined using $\mathcal{VT}_{n-1,n}(c,d)$. It should be noted, however, that $\mathcal{VT}_{n-1,n}(c,d)$ does not include the last row, 
which makes it impossible to distinguish between the deletion of the penultimate row and the deletion of the last row in certain scenarios.

To illustrate this  limitation, consider the case $n=16$ and two $16 \times 16$ arrays $A$ and $B$:\setcounter{MaxMatrixCols}{20}
\[
\scalebox{1.0}{$
\setcounter{MaxMatrixCols}{20}
A=
\left[
\begin{array}{cccccccccccc|cccc}
0 & 0 & 0 & 0 & 0 & 0 & 0 & 0 & 0 & 0 & 0 & 0 & 1 & 1 & 0 & 0 \\
0 & 1 & 0 & 1 & 0 & 1 & 0& 1 & 0 & 1 & 0 & 1 & 1 & 0 & 0 & 1 \\
0 & 0 & 0 & 0 & 0 & 0 & 0 & 0 & 0 & 0 & 0 & 0 & 1 & 1 & 0 & 0 \\
0 & 1 & 0 & 1 & 0 & 1 & 0& 1 & 0 & 1 & 0 & 1 & 1 & 0 & 0 & 1 \\
\hline
0 & 0 & 0 & 0 & 0 & 0 & 0 & 0 & 0 & 0 & 0 & 0 & 0 & 0 & 0 & 0 \\
0 & 0 & 0 & 0 & 0 & 0 & 0 & 0 & 0 & 0 & 0 & 0 & 0 & 1 & 0 & 1 \\
0 & 0 & 0 & 0 & 0 & 0 & 0 & 0 & 0 & 0 & 0 & 0 & 0 & 0 & 0 & 0 \\
1 & 0 & 0 & 0 & 0 & 0 & 0 & 0 & 0 & 0 & 0 & 0 & 1 & 1 & 0 & 1 \\
0 & 0 & 0 & 0 & 0 & 0 & 0 & 0 & 0 & 0 & 0 & 0 & 0 & 0 & 0 & 0 \\
1 & 0 & 0 & 0 & 0 & 0 & 0 & 0 & 0 & 0 & 0 & 0 & 1 & 0 & 0 & 0 \\
0 & 0 & 0 & 0 & 0 & 0 & 0 & 0 & 0 & 0 & 0 & 0 & 0 & 0 & 1 & 1 \\
1 & 0 & 0 & 0 & 0 & 0 & 0 & 0 & 0 & 0 & 0 & 0 & 0 & 1 & 1 &1 \\
1 & 0 & 0 & 0 & 0 & 0 & 0 & 0 & 0 & 0 & 0 & 0 & 1 & 0 & 0 & 0 \\
0 & 1 & 1 & 0 & 0 & 0 & 0 & 0 & 0 & 0 & 0 & 0 & 1 & 1 & 0 & 0 \\
0 & \textcolor{red}{1} & \textcolor{red}{1} & 0 & 0 & 0 & 0 & 0 & 0 & 0 & 0 & 0 & 0 & 0 & 0 & 0 \\
\hline
0 & \textcolor{red}{0} & \textcolor{red}{0} & 0 & 0 & 0 & 0 & 0 & 0 & 0 & 0 & 0 & 0 & 0 & 0 & 0 \\
\end{array}
\right],
$}
\]
\[
\scalebox{1.0}{$
\setcounter{MaxMatrixCols}{20}
B=
\left[
\begin{array}{cccccccccccc|cccc}
0 & 0 & 0 & 0 & 0 & 0 & 0 & 0 & 0 & 0 & 0 & 0 & 1 & 1 & 0 & 0 \\
0 & 1 & 0 & 1 & 0 & 1 & 0& 1 & 0 & 1 & 0 & 1 & 1 & 0 & 0 & 1 \\
0 & 0 & 0 & 0 & 0 & 0 & 0 & 0 & 0 & 0 & 0 & 0 & 1 & 1 & 0 & 0 \\
0 & 1 & 0 & 1 & 0 & 1 & 0& 1 & 0 & 1 & 0 & 1 & 1 & 0 & 0 & 1 \\
\hline
0 & 0 & 0 & 0 & 0 & 0 & 0 & 0 & 0 & 0 & 0 & 0 & 0 & 0 & 0 & 0 \\
0 & 0 & 0 & 0 & 0 & 0 & 0 & 0 & 0 & 0 & 0 & 0 & 0 & 1 & 0 & 1 \\
0 & 0 & 0 & 0 & 0 & 0 & 0 & 0 & 0 & 0 & 0 & 0 & 0 & 0 & 0 & 0 \\
1 & 0 & 0 & 0 & 0 & 0 & 0 & 0 & 0 & 0 & 0 & 0 & 1 & 1 & 0 & 1 \\
0 & 0 & 0 & 0 & 0 & 0 & 0 & 0 & 0 & 0 & 0 & 0 & 0 & 0 & 0 & 0 \\
1 & 0 & 0 & 0 & 0 & 0 & 0 & 0 & 0 & 0 & 0 & 0 & 1 & 0 & 0 & 0 \\
0 & 0 & 0 & 0 & 0 & 0 & 0 & 0 & 0 & 0 & 0 & 0 & 0 & 0 & 1 & 1 \\
1 & 0 & 0 & 0 & 0 & 0 & 0 & 0 & 0 & 0 & 0 & 0 & 0 & 1 & 1 &1 \\
1 & 0 & 0 & 0 & 0 & 0 & 0 & 0 & 0 & 0 & 0 & 0 & 1 & 0 & 0 & 0 \\
0 & 1 & 1 & 0 & 0 & 0 & 0 & 0 & 0 & 0 & 0 & 0 & 1 & 1 & 0 & 0 \\
0 & \textcolor{red}{0} & \textcolor{red}{0} & 0 & 0 & 0 & 0 & 0 & 0 & 0 & 0 & 0 & 0 & 0 & 0 & 0 \\

\hline
0 & \textcolor{red}{1} & \textcolor{red}{1} & 0 & 0 & 0 & 0 & 0 & 0 & 0 & 0 & 0 & 0 & 0 & 0 & 0 \\
\end{array}
\right].
$}
\]
Since $A$ and $B$ differ in only four entries, it is straightforward to verify that both belong to the same class of criss-cross codes constructed in Section VI of~\cite{2021TIT_crisscross1}. If we delete the penultimate row and the first column of $A$, and the last row and the first column of $B$, the resulting arrays are identical. Consequently, it is impossible to determine which original array was used based on the received array. 

This example demonstrates a specific $(1,1)$-criss-cross deletion pattern for which the decoding procedure in Section~VI of~\cite{2021TIT_crisscross1} does not recover the original array, and hence the construction in that section does not satisfy the definition of a \((1,1)\)-criss-cross deletion correcting code.

By the same reasoning, the constructions with explicit encoding algorithms presented in Section~VII of~\cite{2021TIT_crisscross1} also do not satisfy the definition of a \((1,1)\)-criss-cross deletion correcting code in certain cases.





\bibliographystyle{IEEEtran}
\bibliography{ref}

@ARTICLE{2024Differential_VT_codes,
  author={Thanh Nguyen, Tuan and Cai, Kui and Siegel, Paul H.},
  journal={IEEE Trans. Inf. Theory}, 
  title={A new version of {$q$}-ary {Varshamov--Tenengolts} codes with more efficient encoders: The differential {VT} codes and the differential shifted {VT} codes},
  year={2024},
  volume={70},
  number={10},
  pages={6989-7004},
  doi={10.1109/TIT.2024.3417894}}

@ARTICLE{2021TIT_crisscross1,
  author={Bitar, Rawad and Welter, Lorenz and Smagloy, Ilia and Wachter-Zeh, Antonia and Yaakobi, Eitan},
  journal={IEEE Trans. Inf. Theory}, 
  title={Criss-cross insertion and deletion correcting codes},

  year={2021},
  volume={67},
  number={12},
  pages={7999-8015},
  doi={10.1109/TIT.2021.3111450}}

@INPROCEEDINGS{2021ISIT_crisscross2,
  author={Chee, Yeow Meng and Hagiwara, Manabu and Van Khu, Vu},
  booktitle={2021 IEEE Int. Symp. Inf. Theory (ISIT)}, 
  title={Two-dimensional deletion correcting codes and their applications},

  year={2021},
  volume={},
  number={},
  pages={2792-2797},
  doi={10.1109/ISIT45174.2021.9517903}}

@article{2025arxiv_crisscross3,
   title={Criss-cross deletion correcting codes: Optimal constructions with efficient decoders},

   journal={https://arxiv.org/abs/2506.07607},
   author={Yubo Sun and Gennian Ge},
   year={2025}, }

@ARTICLE{Existing1,
  author={Roth, R.M.},
  journal={IEEE Trans. Inf. Theory}, 
  title={{Maximum-rank array codes and their application to crisscross error correction}}, 
  year={1991},
  volume={37},
  number={2},
  pages={328-336},
  doi={10.1109/18.75248}}

@article{Existing2,
author = {Gabidulin, Ernst and Pilipchuk, Nina},
year = {2008},
month = {12},
pages = {105-122},
title = {{Error and erasure correcting algorithms for rank codes}},
volume = {49},
journal = {Des. Codes Cryptography},
doi = {10.1007/s10623-008-9185-7}
}

@INPROCEEDINGS{Existing3,
  author={Lund, D. and Gabidulin, E.M. and Honary, B.},
  booktitle={2000 IEEE Int. Symp. Inf. Theory}, 
  title={{A new family of optimal codes correcting term rank errors}}, 
  year={2000},
  volume={},
  number={},
  pages={115},

  doi={10.1109/ISIT.2000.866405}}

@ARTICLE{Existing4,
  author={Blaum, M. and Bruck, J.},
  journal={IEEE Trans. Inf. Theory}, 
  title={{MDS array codes for correcting a single criss-cross error}}, 
  year={2000},
  volume={46},
  number={3},
  pages={1068-1077},
  doi={10.1109/18.841187}}

@ARTICLE{Existing5,
  author={Roth, R.M.},
  journal={IEEE Trans. Inf. Theory}, 
  title={{Probabilistic crisscross error correction}}, 
  year={1997},
  volume={43},
  number={5},
  pages={1425-1438},
  doi={10.1109/18.623142}}

@INPROCEEDINGS{Existing6,
  author={Wachter-Zeh, Antonia},
  booktitle={2014 IEEE Int. Symp. Inf. Theory}, 
  title={{List decoding of crisscross error patterns}}, 
  year={2014},
  volume={},
  number={},
  pages={1236-1240},
  doi={10.1109/ISIT.2014.6875030}}

@ARTICLE{Existing7,
  author={Wachter-Zeh, Antonia},
  journal={IEEE Trans. Inf. Theory}, 
  title={List decoding of criss-cross errors},

  year={2017},
  volume={63},
  number={1},
  pages={142-149},
  doi={10.1109/TIT.2016.2622283}}

@ARTICLE{Existing8,
  author={Krishnamurthy, Akshay and Mazumdar, Arya and McGregor, Andrew and Pal, Soumyabrata},
  journal={IEEE Trans. Inf. Theory}, 
  title={Trace reconstruction: Generalized and parameterized},

  year={2021},
  volume={67},
  number={6},
  pages={3233-3250},
  doi={10.1109/TIT.2021.3066010}}

@INPROCEEDINGS{Existing9,
  author={Bakırtaş, Serhat and Erkip, Elza},
  booktitle={2021 IEEE Int. Symp. Inf. Theory (ISIT)}, 
  title={Database matching under column deletions},

  year={2021},
  volume={},
  number={},
  pages={2720-2725},
  doi={10.1109/ISIT45174.2021.9518145}}

@article{Existing10,
  title={{Class of correcting codes for errors with a lattice configuration}},
  author={Sidorenko, Vladimir Removich},
  journal={Problemy peredachi informatsii},
  volume={12},
  number={3},
  pages={3--12},
  year={1976},
  publisher={Russian Academy of Sciences, Branch of Informatics, Computer Equipment and~…}
}

@article{Existing11,
  title={{Optimum codes correcting lattice errors}},
  author={Gabidulin, {\`E}rnest Mukhamedovich},
  journal={Problemy peredachi informatsii},
  volume={21},
  number={2},
  pages={103--108},
  year={1985},
  publisher={Russian Academy of Sciences, Branch of Informatics, Computer Equipment and~…}
}

@article{Existing12,
  title={{Multi deletion/substitution/erasure error-correcting codes for information in array design}},
  author={Hagiwara, Manabu},
  journal={IEICE Trans. Fundam. Electron. Commun. Comput. Sci.},
  volume={106},
  number={3},
  pages={368--374},
  year={2023},
  publisher={The Institute of Electronics, Information and Communication Engineers}
}

@INPROCEEDINGS{Existing_Crisscross1,
  author={Bitar, Rawad and Smagloy, Ilia and Welter, Lorenz and Wachter-Zeh, Antonia and Yaakobi, Eitan},
  booktitle={2020 Int. Symp. Inf. Theory Appl.
 (ISITA)}, 
  title={Criss-cross deletion correcting codes},

  year={2020},
  volume={},
  number={},
  pages={304-308},
  doi={}}

@INPROCEEDINGS{Existing_Crisscross2,
  author={Hagiwara, Manabu},
  booktitle={2020 Proc. Int. Symp. Inf. Theory Appl. (ISITA)}, 
  title={Conversion method from erasure codes to multi-deletion error-correcting codes for information in array design},

  year={2020},
  volume={},
  number={},
  pages={274-278},
  doi={}}

@ARTICLE{QR_1,
  author={Abdel-Ghaffar, K.A.S. and McEliece, R.J. and Van Tilborg, H.C.A.},
  journal={IEEE Trans. Inf. Theory}, 
  title={{Two-dimensional burst identification codes and their use in burst correction}}, 
  year={1988},
  volume={34},
  number={3},
  pages={494-504},
  doi={10.1109/18.6029}}

@INPROCEEDINGS{QR_2,
  author={Yaakobi, Eitan and Etzion, Tuvi},
  booktitle={2010 IEEE Int. Symp. Inf. Theory}, 
  title={{High dimensional error-correcting codes}}, 
  year={2010},
  volume={},
  number={},
  pages={1178-1182},
  doi={10.1109/ISIT.2010.5513662}}

@article{DNA1,
  title={{A characterization of the DNA data storage channel}},
  author={Heckel, Reinhard and Mikutis, Gediminas and Grass, Robert N},
  journal={Scientific reports},
  volume={9},
  number={1},
  pages={9663},
  year={2019},
  publisher={Nature Publishing Group UK London}
}

@article{rack1,
  title={{Coding for racetrack memories}},
  author={Chee, Yeow Meng and Kiah, Han Mao and Vardy, Alexander and Vu, Van Khu and Yaakobi, Eitan},
  journal={IEEE Trans. Inf. Theory},
  volume={64},
  number={11},
  pages={7094--7112},
  year={2018},
  publisher={IEEE}
}

@inproceedings{rack2,
  title={{Codes correcting limited-shift errors in racetrack memories}},
  author={Chee, Yeow Meng and Kiah, Han Mao and Vardy, Alexander and Van Vu, Khu and Yaakobi, Eitan},
  booktitle={2018 IEEE Int. Symp. Inf. Theory (ISIT)},
  pages={96--100},
  year={2018},
  organization={IEEE}
}

@article{rack3,
  title={{Magnetic domain-wall racetrack memory}},
  author={Parkin, Stuart SP and Hayashi, Masamitsu and Thomas, Luc},
  journal={science},
  volume={320},
  number={5873},
  pages={190--194},
  year={2008},
  publisher={American Association for the Advancement of Science}
}

@inproceedings{rack4,
  title={{Hi-fi playback: Tolerating position errors in shift operations of racetrack memory}},
  author={Zhang, Chao and Sun, Guangyu and Zhang, Xian and Zhang, Weiqi and Zhao, Weisheng and Wang, Tao and Liang, Yun and Liu, Yongpan and Wang, Yu and Shu, Jiwu},
  booktitle={Proceedings of the 42nd Annual Int. Symp. on Computer Architecture},
  pages={694--706},
  year={2015}
}

@article{2022TIT_multi,
  title={{Multiple criss-cross insertion and deletion correcting codes}},
  author={Welter, Lorenz and Bitar, Rawad and Wachter-Zeh, Antonia and Yaakobi, Eitan},
  journal={IEEE Trans. Inf. Theory},
  volume={68},
  number={6},
  pages={3767--3779},
  year={2022},
  publisher={IEEE}
}

@article{1965VTcode,
  title={{A code for correcting a single asymmetric error}},
 author={Varshamov, Rom R and Tenenholtz, G M},
  journal={Automatica i Telemekhanika},
  volume={26},
  number={2},
  pages={288--292},
  year={1965}
}

\end{document}